\newcommand{\e}{{\rm e}}
\newcommand{\E}{{\mathbb E}}
\newcommand{\Pa}{{\mathbb P}}
\newcommand{\Q}{{\mathbb Q}}
\newcommand{\R}{{\mathbb R}}
\newcommand{\La}{{\mathbb L}}
\newcommand{\Acal}{{\mathcal A}}
\newcommand{\Bcal}{{\mathcal B}}
\newcommand{\Ccal}{{\mathcal C}}
\newcommand{\Dcal}{{\mathcal D}}
\newcommand{\Ecal}{{\mathcal E}}
\newcommand{\Fcal}{{\mathcal F}}
\newcommand{\Ucal}{{\mathcal U}}
\newcommand{\Mcal}{{\mathcal M}}
\newtheorem{proposition}{Proposition}[section]
\newtheorem{lemma}[proposition]{Lemma}
\newtheorem{theorem}[proposition]{Theorem}
\newtheorem{definition}[proposition]{Definition}
\newtheorem{corollary}[proposition]{Corollary}
\newtheorem{remark}[proposition]{Remark}
\newtheorem{exampleemph}[proposition]{Example}   
\newenvironment{example}{\begin{exampleemph}\begin{upshape}}{\end{upshape}\end{exampleemph}} 
\newtheorem{foo}[proposition]{Remarks}
\begin{document}

\title{On the Relation Between Linearity-Generating Processes and Linear-Rational Models\footnote{
The authors wish to thank participants at the ETH-ITS Workshop on Mathematical Finance Beyond Classical Models, LSE Risk and Stochastics Conference 2018, Aarhus Econometrics-Finance Seminar, SFI Research Days Gerzensee 2018, Paul Schneider, J\'er\^ome Detemple and Vadim Linetsky (the editors), and two anonymous referees for comments.
The research leading to these results has received funding from the European Research Council under the European Union's Seventh Framework Programme (FP/2007-2013) / ERC Grant Agreement n.~307465-POLYTE.}
}

\author{
Damir Filipovi\'c\thanks{EPFL and Swiss Finance Institute. Email: damir.filipovic@epfl.ch} \quad\quad
Martin Larsson\thanks{ETH Zurich. Email: martin.larsson@math.ethz.ch} \quad\quad
Anders B. Trolle\thanks{HEC Paris. Email: trolle@hec.fr}}
\date{June 5, 2018}
\maketitle

\begin{abstract}
We review the notion of a linearity-generating (LG) process introduced by \citeN{Gabaix:2007} and relate LG processes to linear-rational (LR) models studied by \citeN{FilipovicLarssonTrolle14}. We show that every LR model can be represented as an LG process and vice versa. We find that LR models have two basic properties which make them an important representation of LG processes. First, LR models can be easily specified and made consistent with nonnegative interest rates. Second, LR models go naturally with the long-term risk factorization due to \citeN{AlvarezJermann2005}, \citeN{HansenScheinkman2009}, and  \citeN{qin_lin_17}. Every LG process under the long forward measure can be represented as a lower dimensional LR model.
\end{abstract}

\section{Introduction}

Linearity-generating (LG) processes constitute an important class of stochastic models in finance that yield linear asset prices. LG processes were introduced by \citeN{Gabaix:2007} in discrete time and in continuous time. LG processes were further studied by \citeN{CheriditoGabaix2008} and subsequently applied by \citeN{CarrGabaixWu2009} for modeling interest rates, and by \citeN{Gabaix2012} and \citeN{Gabaix2016} for solving puzzles in macro-finance.

\citeN{FilipovicLarssonTrolle14} have recently introduced the class of linear-rational (LR) models for the term structure of interest rates in continuous time. The state price density $\zeta_t$ is a linear function in the factor process,  $\zeta_t = \e^{-\alpha t}(\phi +\psi^\top Z_t)$, which has a linear drift, $dZ_t = (b+\beta Z_t)\,dt + dM^Z_t$.  Nominal and deflated bond prices become linear-rational and linear in the factor process, respectively. LR models are related to LG processes, but the exact mapping is not straightforward.

In this paper, we rigorously relate LR models to LG processes in continuous time. We first review the definition of an LG process and provide a set of equivalent characterizations. We then show that every LG process $(\zeta_t, X_t)$ can naturally be represented as an LR model given by $Z_t=(\zeta_t;\zeta_t X_t)$. Conversely, we find that every $m$-dimensional LR model can be represented as an $(m+1)$-dimensional LG process. But this mapping is not surjective. Specifically, we call two LR models \emph{observationally equivalent} if they induce the same normalized state price densities and thus are indistinguishable from an economic point of view. Under some mild non-degeneracy conditions, we fully characterize the set of \emph{reducible} $(m+1)$-dimensional LG processes, namely those which are observationally equivalent to some $m$-dimensional LR models. Any such LR model in turn is shown to be \emph{proper}, in the sense that it cannot be represented as an LG process of the same or lower dimension.

We find that LR models have two basic properties which make them an important representation of LG processes. First, LR models can be easily specified and made consistent with nonnegative interest rates. Second, we show that the state price density specification in an LR model goes naturally with the long-term factorization of the state price density into a transitory and permanent component due to \citeN{AlvarezJermann2005} and \citeN{HansenScheinkman2009} in Markovian environments, and extended to a general semimartingale environment in \citeN{qin_lin_17}. Specifically, we prove that any proper LR model whose drift matrix $\beta$ has only eigenvalues with negative real parts yields the transitory component of the state price density. In other words, the reference probability measure of such an LR model coincides with the long forward measure, under which the gross return on the investment of one dollar at time zero in the zero-coupon bond of asymptotically long maturity is growth optimal.\footnote{ The long forward measure was first used by \citeN{FlesakerHughston1996} who called it the terminal measure.} The long-term yield in turn is shown to be constant and equal to $\alpha$. As main result we show that every LG process under the long forward measure is reducible, and hence observationally equivalent to a lower dimensional proper LR model.

The remainder of the paper is as follows.
Section~\ref{secLGchar} reviews LG processes.
Section~\ref{secspecLRNew} discusses the specification of LR models.
Section~\ref{secLRNew} contains our main results on the relation between LG processes and LR models.
Section~\ref{secLTF} shows that LR models appear naturally in the context of the long-term risk factorization due to \citeN{AlvarezJermann2005} and \citeN{HansenScheinkman2009}. Section~\ref{secconc} concludes.
All proofs are provided in the Appendix.

\section{Characterization of LG processes}\label{secLGchar}
We rigorously review the notion of an LG process. Throughout, we fix a filtered probability space $(\Omega, \Fcal_t,\Fcal,{\Pa^\ast})$. The measure ${\Pa^\ast}$ represents an auxiliary measure, which is locally equivalent but not necessarily identical to the objective measure $\Pa$. We write $\E^\ast_t[\cdot]$ for the $\Fcal_t$-conditional ${\Pa^\ast}$-expectation operator. Equalities between random variables are understood to hold almost surely. For more terminology and background of stochastic processes we refer to \citeN{jac_shi_03}.

Throughout this paper, we consider as state price density process an integrable positive semimartingale with multiplicative decomposition of the form
\begin{equation}\label{eqspddef}
  \zeta_t = \E^\ast[\zeta_0] \e^{-\int_0^t r_s \,ds} D_t
\end{equation}
for some short rate process $r_t$ and positive martingale $D_t$ with $\E^\ast[D_t]=1$. The corresponding risk-neutral measure $\Q$ is equivalent to $\Pa^\ast$ on each $\Fcal_t$ with Radon--Nikodym density $D_t$.\footnote{The risk-neutral measure $\Q$ is only locally defined on each $\Fcal_t$ but not globally on $\Fcal_\infty$, unless $D_t$ is uniformly integrable.} The state price density process $\zeta_t$ is a supermartingale if and only if the short rate process is nonnegative, $r_t\ge 0$. Let $X_t$ be an $n$-dimensional semimartingale such that $\zeta_t X_t$ is integrable.

We now give an integral-form version of the definition of a linearity-generating process introduced by~\citeN{Gabaix:2007}.\footnote{\citeN{Gabaix:2007} introduced LG processes in continuous time as well as discrete time. In this paper, we focus on continuous time.}

\begin{definition}\label{defLG}
The pair $(\zeta_t,X_t)$ forms an $(n+1)$-dimensional \emph{linearity-generating (LG) process} if there exists some continuously differentiable functions $\Acal$, $\Bcal$, $\Ccal$, $\Dcal$ with values in $\R$, $\R^{1\times n}$, $\R^{n\times 1}$, $\R^{n\times n}$, respectively, such that
\begin{equation}\label{eq1}
\begin{aligned}
\E^\ast_t\left[ \frac{\zeta_T}{\zeta_t}\right] &= \Acal(T-t)  + \Bcal(T-t)X_t \\
\E^\ast_t\left[ \frac{\zeta_T}{\zeta_t} X_T\right] &= \Ccal(T-t)  + \Dcal(T-t)X_t
\end{aligned}
\end{equation}
for all $0\le t\le T<\infty$.
\end{definition}

An LG process thus yields a linear time-$t$ price in $X_t$ of any contingent claim with linear time-$T$ payoff in $X_T$. For example, the first line in \eqref{eq1} is the price of a zero-coupon bond maturing at $T$,
\begin{equation}\label{eqPtT}
 P(t,T) =  \Acal(T-t)  + \Bcal(T-t)X_t .
\end{equation}

\begin{remark}
In \citeN{Gabaix:2007} the state price density $\zeta_t$ is replaced by the more general expression ``$M_t D_t$'', a pricing kernel ``$M_t$'' times a dividend ``$D_t$''. This allows for linear pricing of dividend paying assets. For zero-coupon bonds we have ``$D_t$''=1, and this is what we focus on in this paper.
\end{remark}

We next provide an equivalent characterization of the LG property. We define the \emph{affine support} and the \emph{linear support} of an $m$-dimensional semimartingale $Z_t$ as
\[ {\rm aff}(Z_\cdot) = \bigcap_{\substack{\text{$A$ affine subspace of $\R^m$} \\ \text{$Z_t\in A$ for all $t\ge 0$}}} A \]
and
\[ {\rm lin}(Z_\cdot) = \bigcap_{\substack{\text{$A$ linear subspace of $\R^m$} \\ \text{$Z_t\in A$ for all $t\ge 0$}}} A .\]
Note that ${\rm aff}(Z_\cdot)\subseteq {\rm lin}(Z_\cdot)$, with equality if and only if $0\in {\rm aff}(Z_\cdot)$. If ${\rm aff}(Z_\cdot)=\R^m$ then any affine function $h+H^\top z$ on $\R^m$ is already specified by its values along $Z_t$, in the sense that $h+H^\top Z_t=0$ for all $t\ge 0$ if and only if $h=0$ and $H=0$. Similarly, if ${\rm lin}(Z_\cdot)=\R^m$ then any linear function $H^\top Z_t=0$ for all $t\ge 0$ if and only if $H=0$.

\begin{theorem}\label{thmcharLG}
Assume that $Y_t=(\zeta_t;\zeta_t X_t)$ has full linear support, ${\rm lin}(Y_\cdot)=\R^{n+1}$. The following statements are equivalent:
\begin{enumerate}
\item\label{thmcharLG1} $(\zeta_t,X_t)$ forms an LG process;

\item \label{thmcharLG3} $Y_t=(\zeta_t;\zeta_t X_t)$ admits a drift that is strictly linear in $Y_t$,
\begin{equation}\label{eqXMXY}
 dY_t = \kappa Y_t\,dt + dM^Y_t ,
\end{equation}
for some $\kappa\in\R^{(n+1)\times (n+1)}$, and its local martingale part $M^Y_t$ is a martingale. The state price density is given by $\zeta_t = \bm e_1^\top Y_t$, where $\bm e_1^\top =(1,0,\dots,0)$;

\item \label{thmcharLG3new} for any $\alpha\in\R$ and invertible $Q\in\R^{(n+1)\times (n+1)}$, the process $Z_t=\e^{\alpha t}Q(\zeta_t;\zeta_t X_t)$ admits a drift that is strictly linear in $Z_t$,
\begin{equation*}
 dZ_t = \beta Z_t\,dt + dM^Z_t ,
\end{equation*}
for some $\beta\in\R^{(n+1)\times (n+1)}$, and its local martingale part $M^Z_t$ is a martingale. The state price density is given by $\zeta_t = \e^{-\alpha t}\psi^\top Z_t$, where $\psi^\top = \bm e_1^\top Q^{-1}$.
\end{enumerate}

In either case, the functions $\Acal$, $\Bcal$, $\Ccal$, $\Dcal$ in \eqref{eq1} and the matrices $\kappa$ and $\beta$ are related as
\begin{equation}\label{eq8New}
\begin{pmatrix}
\Acal(\tau) & \Bcal(\tau) \\ \Ccal(\tau) & \Dcal(\tau)
\end{pmatrix}=\e^{\kappa\tau}\quad\text{and}\quad \beta = \alpha {\rm Id} + Q\kappa Q^{-1}.
\end{equation}
Moreover, the short rate $r_t$ is a linear function of $X_t$,
\begin{equation}\label{rQeqN}
  r_t = -{A}   - {B} X_t,
\end{equation}
and $X_t$ admits a $\Q$-drift $\mu^{X,\Q}_t$ that is a quadratic function in $X_t$,
\begin{equation}\label{muQeqN}
\mu^{X,\Q}_t =   {C}+ (r_t+{D}) X_t ={C} + ({D}-{A})X_t -({B} X_t) X_t ,
\end{equation}
where $A\in\R$, $B\in\R^{1\times n}$, $C\in\R^{n\times 1}$, and $D\in\R^{n\times n}$ are given in terms of $\kappa$ by
\[\kappa=   \begin{pmatrix}
{A} & {B} \\ {C} & {D}
\end{pmatrix}.\]
\end{theorem}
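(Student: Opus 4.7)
The plan is to prove (i)$\Leftrightarrow$(ii) using the semigroup property of the conditional-expectation kernel, deduce (iii) by a change of variables, and finally extract the formulas for $r_t$ and $\mu^{X,\Q}_t$ by matching drifts. For (i)$\Rightarrow$(ii) I would pack the two LG relations into one block identity $\E^\ast_t[Y_T] = F(T-t)\,Y_t$, where $F(\tau) := \begin{pmatrix}\Acal(\tau)&\Bcal(\tau)\\ \Ccal(\tau)&\Dcal(\tau)\end{pmatrix}$. The tower property gives $F(T-s)\,Y_s = F(T-t)F(t-s)\,Y_s$ for $s\le t\le T$, and taking $T=t$ gives $F(0)\,Y_t = Y_t$. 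The assumption ${\rm lin}(Y_\cdot)=\R^{n+1}$ lets me eliminate $Y_s$ and $Y_t$ from these random identities to obtain the semigroup relation $F(\tau_1+\tau_2)=F(\tau_1)F(\tau_2)$ with $F(0)=\Id$. Continuous differentiability then forces $F(\tau)=\e^{\kappa\tau}$ with $\kappa:=F'(0)$. Setting $N_t:=\e^{-\kappa t}Y_t$, the identity $\E^\ast_t[Y_T]=\e^{\kappa(T-t)}Y_t$ is exactly the statement that $N$ is a $\Pa^\ast$-martingale; integration by parts on $Y_t=\e^{\kappa t}N_t$ then yields $dY_t=\kappa Y_t\,dt + \e^{\kappa t}\,dN_t$, and since $s\mapsto\e^{\kappa s}$ is deterministic and bounded on compacts, the stochastic integral $M^Y_t:=\int_0^t \e^{\kappa s}\,dN_s$ is a true martingale on every finite horizon.

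For (ii)$\Rightarrow$(i) I would reverse this: given $dY_t=\kappa Y_t\,dt + dM^Y_t$ with $M^Y$ a martingale, It\^o gives $d(\e^{-\kappa t}Y_t)=\e^{-\kappa t}\,dM^Y_t$, a martingale, so $\E^\ast_t[Y_T]=\e^{\kappa(T-t)}Y_t$; expanding in blocks and dividing by $\zeta_t$ recovers \eqref{eq1} with the identification $\begin{pmatrix}\Acal(\tau)&\Bcal(\tau)\\ \Ccal(\tau)&\Dcal(\tau)\end{pmatrix}=\e^{\kappa\tau}$. The equivalence (ii)$\Leftrightarrow$(iii) is then a direct calculation: integration by parts applied to $Z_t=\e^{\alpha t}QY_t$ gives
\[ dZ_t=(\alpha\Id+Q\kappa Q^{-1})Z_t\,dt + \e^{\alpha t}Q\,dM^Y_t, \]
which identifies $\beta=\alpha\Id+Q\kappa Q^{-1}$ and preserves the true-martingale property by the same deterministic-integrand argument, while $\zeta_t=\bm e_1^\top Y_t=\bm e_1^\top Q^{-1}\e^{-\alpha t}Z_t=\e^{-\alpha t}\psi^\top Z_t$ with $\psi^\top=\bm e_1^\top Q^{-1}$; the reverse direction (iii)$\Rightarrow$(ii) is the special case $\alpha=0$, $Q=\Id$.

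Finally, writing $\kappa=\begin{pmatrix}A&B\\ C&D\end{pmatrix}$, the first row of (ii) yields $d\zeta_t$ with drift $(A+BX_t)\zeta_t$; comparing with the drift $-r_t\zeta_t$ dictated by the multiplicative decomposition \eqref{eqspddef} gives $r_t=-A-BX_t$. The remaining rows give $d(\zeta_t X_t)$ with drift $C\zeta_t+D\zeta_t X_t$; the product rule $d(\zeta_t X_t)=\zeta_t\,dX_t+X_t\,d\zeta_t+d[\zeta,X]_t$ matched against this drift yields $\zeta_t\mu^X_t + d\langle\zeta,X\rangle_t/dt = C\zeta_t+(D+r_t\Id)\zeta_t X_t$, where $\mu^X_t$ is the $\Pa^\ast$-drift of $X_t$. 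Since $D_t=\zeta_t\e^{\int_0^t r_s\,ds}/\E^\ast[\zeta_0]$ differs from $\zeta_t$ only by a finite-variation factor, a standard Girsanov computation delivers $\mu^{X,\Q}_t-\mu^X_t = \zeta_t^{-1}d\langle\zeta,X\rangle_t/dt$, identifying the left-hand side as $\mu^{X,\Q}_t=C+(r_t+D)X_t$; substituting $r_t=-A-BX_t$ gives the quadratic form \eqref{muQeqN}. The main obstacle is the passage from the semigroup relation to the matrix exponential: this is the unique place where the full linear support hypothesis is essential, as it alone allows the cancellation of $Y$ from a random identity; every remaining step is routine It\^o and Girsanov bookkeeping.
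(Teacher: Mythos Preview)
Your argument for the equivalences (i)$\Leftrightarrow$(ii)$\Leftrightarrow$(iii) and the identification \eqref{eq8New} is correct and matches the paper's proof essentially step for step: pack \eqref{eq1} into $\E^\ast_t[Y_T]=F(T-t)Y_t$, use the tower property plus full linear support to get the semigroup law, deduce $F(\tau)=\e^{\kappa\tau}$, and recover \eqref{eqXMXY} from the martingale $N_t=\e^{-\kappa t}Y_t$; the paper justifies the true-martingale property of $M^Y$ by writing out the integration-by-parts identity and applying Fubini explicitly, which is exactly what underlies your ``deterministic bounded integrand'' claim.

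For \eqref{rQeqN}--\eqref{muQeqN} the paper takes a slightly different and cleaner route than your product-rule-plus-Girsanov computation. It observes that $N'_t:=N_t/D_t=\e^{-\kappa t-\int_0^t r_s\,ds}(1;X_t)$ is a $\Q$-martingale by abstract Bayes, then applies It\^o to $(1;X_t)=\e^{\kappa t+\int_0^t r_s\,ds}N'_t$ to read off the $\Q$-drift $\mu^{(1;X),\Q}_t=(\kappa+r_t)(1;X_t)$ directly; the first row yields $r_t=-A-BX_t$ and the remaining rows give \eqref{muQeqN}. This buys two things over your approach: it never assumes that $X_t$ is special under $\Pa^\ast$ (your $\mu^X_t$ need not exist a priori), and it avoids the Girsanov identity $\mu^{X,\Q}_t-\mu^X_t=\zeta_t^{-1}\,d\langle\zeta,X\rangle_t/dt$, which as you have written it is the continuous-semimartingale version and would need extra care in the general jump setting the paper allows.
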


\begin{remark}
As the proof of Theorem~\ref{thmcharLG} reveals, the implications ${\ref{thmcharLG3new}}\Leftrightarrow{\ref{thmcharLG3}}\Rightarrow{\ref{thmcharLG1}}$ hold without the assumption that ${\rm lin}(Y_\cdot)=\R^{n+1}$.
\end{remark}

Property~{\ref{thmcharLG3}} in Theorem~\ref{thmcharLG} is the definition of an LG process provided in \citeN{Gabaix:2007}, where the quadratic part in the $\Q$-drift of $X_t$ as shown in~\eqref{muQeqN} is referred to as a linearity-generating twist of an AR(1) process.

\begin{remark}
We shall henceforth refer to either pair, $(\zeta_t,X_t)$ or $(Y_t,\zeta_t)$ in Theorem~\ref{thmcharLG}{\ref{thmcharLG3}} or $(Z_t,\zeta_t)$ in Theorem~\ref{thmcharLG}{\ref{thmcharLG3new}}, as $(n+1)$-dimensional LG process.
\end{remark}

While Theorem~\ref{thmcharLG} gives a set of equivalent characterizations of the LG property for the {\emph{given}} pair $(\zeta_t,X_t)$, it does not provide conditions for the {\emph{existence}} of such processes. Indeed, it is nontrivial to specify an LG process $(\zeta_t,X_t)$.

\citeN{CheriditoGabaix2008} and \citeN{CarrGabaixWu2009} specify an $(n+1)$-dimensional LG process $(\zeta_t,X_t)$ by first specifying $Y_t$---that is, the matrix $\kappa$ and the martingale $M^Y_t$---in \eqref{eqXMXY} and then set $\zeta_t = \bm e_1^\top Y_{t}$ and $X_t = Y_{2..n+1,t}/\zeta_t$. More specifically, \citeN{CheriditoGabaix2008} model the factor process $Y_t$ in \eqref{eqXMXY} as a Markov jump-diffusion and provide sufficient conditions on the parameters such that $Y_t$ takes values in the positive orthant $(0,\infty)^{n+1}$. \citeN{CarrGabaixWu2009} specify the martingale part $M^Y_t$ in \eqref{eqXMXY} having an exogenous stochastic volatility component $v_t$, such that $(Y_t,v_t)$ becomes a Markov factor process.

The problem with specifying $Y_t$ as (a component of) the driving factor process is that it does not go well with stationarity. Stationarity of the factor process is desirable in view of statistical model estimation. However, the first component of $Y_t$ is the state price density, which converges to zero in expectation. This suggests that $Y_t$ is not stationary. It is in fact $X_t$ that generally can be thought of as stationary, see also \citeN[Section 3.1]{Gabaix:2007}. This suggests that $X_t$ should be specified as factor process.

However, several issues arise. First, both the state price density and bond prices $P(t,T)$ given by \eqref{eqPtT} have to be positive. Second, if one also requires nonnegative interest rates, which is equivalent to $0<P(t,T)\le 1$ for all $T\ge t$, then the support of $X_t$ has to lie in an intersection of a continuum of half-spaces. Third, $X_t$ has a quadratic $\Q$-drift~\eqref{muQeqN} and a highly nonlinear drift under ${\Pa^\ast}$ in general. Taken together, this makes it difficult to find a priori conditions on the model parameters such that the LG process $(\zeta_t,X_t)$ is well defined and interest rates are nonnegative, or at least bounded from below.

\section{Specification of LR models}\label{secspecLRNew}

We first recall the definition of an LR model introduced by \citeN{FilipovicLarssonTrolle14}.
\begin{definition}
An $m$-dimensional \emph{linear-rational (LR) model} consists of an $m$-dimensional integrable semimartingale $Z_t$ with linear drift and a linear state price density specification,
\begin{equation}\label{mLR}
   dZ_t = (b+\beta Z_t)\,dt + dM^Z_t\quad\text{and}\quad  \zeta_t = \e^{-\alpha t} (\phi+ \psi^\top Z_t),
\end{equation}
for some parameters $b\in\R^m$, $\beta\in\R^{m\times m}$, martingale $M^Z_t$, and parameters $\alpha,\phi \in\R$, $\psi\in\R^m$ such that $\zeta_t>0$ for all $t\ge 0$.
\end{definition}

From Theorem~\ref{thmcharLG}{\ref{thmcharLG3new}} we infer that every $m$-dimensional LG process $(Z_t,\zeta_t)$ can be represented as an $m$-dimensional LR model~\eqref{mLR} with $b=0$ and $\phi=0$. Conversely, it is straightforward to see that every $m$-dimensional LR model~\eqref{mLR} can be represented as an $(m+1)$-dimensional LG process given by
\begin{equation}\label{LRasLGm1new}
Z'_t  =   (1;Z_t) \quad\text{and}\quad  \zeta_t = \e^{-\alpha t}\psi'^\top Z'_t
\end{equation}
with $\psi'  = (\phi;\psi)$. Indeed, it follows by inspection that $Z'_t$ has strictly linear drift,
\[ dZ'_t = \beta' Z'_t \,dt + dM^{Z'}_t,\]
with $(m+1)\times (m+1)$-drift matrix
\[ \beta' = \begin{pmatrix} 0 & 0 \\ b & \beta \end{pmatrix} ,\]
and martingale part $ dM^{Z'}_t = \begin{pmatrix} 0 ;  dM^Z_t\end{pmatrix}$.

\begin{remark}\label{remLGm1}
If $Z_t$ has full linear support, ${\rm lin}(Z_\cdot)=\R^m$, then $Z'_t$ has full linear support, ${\rm lin}(Z'_\cdot)=\R^{m+1}$. But obviously, $Z_t'$ takes values in the affine hyperplane $\{1\}\times\R^m\subset\R^{m+1}$, so that $\dim{\rm aff}(Z'_\cdot)<m+1$.
\end{remark}

In line with~\eqref{eqPtT}, we find that bond prices and short rate become linear-rational in $Z_t$, for $t\le T$,
\begin{equation}\label{PtTZ}
 P(t,T)= \E^\ast_t\left[\frac{\zeta_T}{\zeta_t}\right]= \e^{-\alpha (T-t)} \frac{\phi+\psi^\top\e^{\beta(T-t)}\int_0^{T-t} \e^{-\beta s}b\,ds + \psi^\top\e^{\beta(T-t)} Z_t}{\phi  + \psi^\top Z_t}
\end{equation}
and
\begin{equation}\label{rLR}
  r_t = -\partial_T\log P(t,T)|_{T=t}=\alpha -\frac{\psi^\top(b+\beta Z_t)}{\phi+\psi^\top Z_t}.
\end{equation}
The Radon--Nikodym density process of $\Q$ with respect to ${\Pa^\ast}$ is given by the stochastic exponential $D_t=\Ecal_t(L^D)$ where the local martingale $L^D_t$ is determined by
\begin{equation}\label{NDLR}
  dL^D_t = \frac{\psi^\top dM^Z_t}{\phi+\psi^\top Z_{t-}} .
\end{equation}
Indeed, by definition \eqref{eqspddef} the state price density satisfies $d\zeta_t = -\zeta_t r_t\,dt + \zeta_{t-}dL^D_t$ where $L^D_t$ is the stochastic logarithm of the Radon--Nikodym density process $D_t$. Expanding $\zeta_t$ in \eqref{mLR} and matching drift and martingale terms yields~\eqref{NDLR}.

\begin{remark}\label{remUSV}
It is not required that $Z_t$ has the Markov property. In applications it is often the case that there is an $n$-dimensional semimartingale $U_t$ such that $(Z_t,U_t)$ becomes a Markov process. Such a $U_t$ feeds into the characteristics of the martingale part $M^Z_t$ of $Z_t$. As $U_t$ does not directly appear in the bond price formula~\eqref{PtTZ}, it is \emph{unspanned} by the term structure. The unspanned factor $U_t$ will typically be revealed by prices of bond options. This property of LR models (and LG processes alike) to admit unspanned factors is important in view of the well documented unspanned stochastic volatility phenomenon in financial data, see \citeN{CollinDufresneGoldstein2002}. For more details, we refer the reader to \citeN{FilipovicLarssonTrolle14}.
\end{remark}

A key advantage of LR models is the ease with which they can be specified and be made consistent with nonnegative interest rates. We now sketch how to specify an $m$-dimensional LR model~\eqref{mLR}. Positivity of $\zeta_t$ is achieved by assuming that $\phi>0$ and $\psi^\top Z_t\ge 0$. The latter is tantamount to saying that $Z_t$ takes values in a state space $E$ that is contained in the half-space $\{ z\mid \psi^\top z\ge 0\}$, which is easy to achieve. As dividing the state price density $\zeta_t$ by $\phi$ does not affect prices we may and will henceforth take $\phi = 1$.

 In view of the linear-rational expression \eqref{rLR}, the short rate satisfies $r_t\ge \alpha-\alpha^\ast$ where we denote
\[ \alpha^\ast = \sup_{z\in E} \frac{\psi^\top(b+\beta z)}{1+\psi^\top z}.\]
The value $\alpha^\ast$ is finite under a mild non-degeneracy condition on the model parameters as will be seen in a more specific setup below, see \eqref{condnonnegrates}. Setting $\alpha=\alpha^\ast$ then implies nonnegative interest rates. More generally, we can lower bound interest rates by any level $-\delta$ by setting $\alpha=\alpha^\ast-\delta$. 

As for the finiteness of $\alpha^\ast$, we now assume that the state space of $Z_t$ is the nonnegative orthant, $E=\R^m_+$, and that $\psi\in\R^m_+$. Then there is a simple condition such that $\alpha^\ast$ is finite, and thus interest rates are bounded below. This condition is given in~\citeN[Lemma~5]{FilipovicLarssonTrolle14}, which in the notation of this paper reads as follows. Denote by $\beta_i$ the $i$th column vector of the matrix $\beta$, and let $I$ be the set of indices $i=1,\dots,d$ for which $\psi_i>0$.  We can write
\[ \frac{\psi^\top (b+\beta z)}{\phi+\psi^\top z} = \frac{(\psi^\top b) +\sum_{i=1}^m (\psi^\top\beta_i) z_i}{\phi+\sum_{i\in I}  \psi_i z_i}.\]
From this expression it follows that if
\begin{equation}\label{condnonnegrates}
\psi^\top\beta_i\le 0\quad\text{for all $i\notin I$}
\end{equation}
then $\alpha^\ast$ is finite and given by
\[ \alpha^\ast = \max\left\{ \frac{\psi^\top b}{\phi},\,\frac{\psi^\top\beta_i}{\psi_i},\,i\in I\right\} .\]

There are many ways of specifying an $\R^m_+$-valued semimartingale $Z_t$ with linear drift as in \eqref{mLR}. Examples include any $\R^m_+$-valued component with autonomous drift of an $(m+n)$-dimensional affine jump-diffusion $(Z_t,U_t)$ or polynomial diffusion $(Z_t,U_t)$ studied in \citeN{FilipovicLarsson15}, see also Remark~\ref{remUSV}. This fact together with the simple condition~\eqref{condnonnegrates} provides an easy way of specifying LR models, and hence LG processes, that yield nonnegative interest rates and exhibit unspanned stochastic volatility. Affine jump-diffusion factor processes also have the great advantage that derivatives whose payoffs are nonlinear functions of the state $Z_T$, or more generally $(Z_T,U_T)$, such as interest rate swaptions or more general options on coupon bonds, can be priced efficiently using Fourier transform methods. For more details, we refer the reader to \citeN{FilipovicLarssonTrolle14}.\footnote{The specification of $Y_t$ in \citeN{CarrGabaixWu2009} is also of affine type, albeit time-inhomogeneous, and they use this to derive option pricing formulas via Fourier transforms.}

\section{Relation between LG processes and LR models}\label{secLRNew}

We have seen in \eqref{LRasLGm1new} that every $m$-dimensional LR model can be represented as an $(m+1)$-dimensional LG process. This raises the following questions:

\begin{description}
\item[Q1] Can an LR model also be represented as an LG process of the same dimension?
\item[Q2] Can every LG process be represented as a lower dimensional LR model?
\end{description}

We shall see that the answer is no to both questions in general.

\subsection{Observational equivalence}
We elaborate on questions {\bf Q1} and {\bf Q2} in the context of the following equivalence relation.
\begin{definition}
We say that a $m'$-dimensional LR model
\[    dZ'_t = (b'+\beta' Z'_t)\,dt + dM^{Z'}_t\quad\text{and}\quad
   \zeta'_t = \e^{-\alpha' t} (\phi'+ \psi'^\top Z'_t)\]
is \emph{observationally equivalent} to the $m$-dimensional LR model \eqref{mLR} if the normalized state price densities $\zeta'_t/\zeta'_0=\zeta_t/\zeta_0$ for all $t\ge 0$.
\end{definition}

Observationally equivalent LR models thus have identical normalized state price densities and thus are indistinguishable from an economic point of view.

The following result is straightforward.

\begin{lemma}\label{lemLRphi0}
Any $m$-dimensional LR model~\eqref{mLR} is observationally equivalent to the $m$-dimensional LR model $Z_t'=Z_t-q$ and $\zeta_t'=\zeta_t$ satisfying
\[ dZ_t' = (b' +\beta Z_t')\,dt + dM^Z_t \quad\text{and}\quad \zeta_t'=\e^{-\alpha t}\psi^\top Z_t'\]
with $b'=b+\beta q$, for any $q\in\R^m$ such that $\phi+\psi^\top q=0$.
\end{lemma}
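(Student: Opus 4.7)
The plan is to take the candidate process $Z'_t := Z_t - q$ with $q$ chosen so that $\phi + \psi^\top q = 0$, and verify directly that it satisfies the three requirements: (i) it has linear drift with the claimed coefficients; (ii) $\zeta'_t := \zeta_t$ admits the advertised linear representation $\zeta'_t = \e^{-\alpha t}\psi^\top Z'_t$; and (iii) $Z'_t, \zeta'_t$ constitute a bona fide LR model, in particular with $\zeta'_t>0$ for all $t$.

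First, I would compute the dynamics of $Z'_t$. Since $q$ is a deterministic constant, $dZ'_t = dZ_t = (b + \beta Z_t)\,dt + dM^Z_t$. Substituting $Z_t = Z'_t + q$ yields
\begin{equation*}
 dZ'_t = (b + \beta q + \beta Z'_t)\,dt + dM^Z_t = (b' + \beta Z'_t)\,dt + dM^Z_t,
\end{equation*}
with $b' = b + \beta q$, as claimed. In particular $Z'_t$ inherits the martingale part $M^Z_t$, so it remains a semimartingale with linear drift.

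Next I would expand the state price density:
\begin{equation*}
 \zeta_t = \e^{-\alpha t}(\phi + \psi^\top Z_t) = \e^{-\alpha t}\bigl((\phi + \psi^\top q) + \psi^\top(Z_t - q)\bigr) = \e^{-\alpha t}\psi^\top Z'_t,
\end{equation*}
where the final equality uses the defining relation $\phi + \psi^\top q = 0$. Thus setting $\zeta'_t = \zeta_t$ gives the required linear-state-price representation with $\phi' = 0$ and the same $\alpha, \psi$. Since $\zeta'_t = \zeta_t > 0$ by hypothesis, the positivity condition built into the definition of an LR model is automatically satisfied.

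Finally, observational equivalence is immediate: because $\zeta'_t$ equals $\zeta_t$ pointwise, the normalized ratios $\zeta'_t/\zeta'_0$ and $\zeta_t/\zeta_0$ agree for every $t\ge 0$. There is no real obstacle here; the only modeling point worth highlighting is that the construction simply absorbs the constant $\phi$ into the factor process by a translation of coordinates, which is why observational equivalence lets us restrict attention, without loss of generality, to LR models with $\phi=0$ whenever such a $q$ exists (i.e., whenever $\psi\ne 0$).
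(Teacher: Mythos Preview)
Your proof is correct and is precisely the direct verification the paper has in mind; the paper itself does not spell out a proof and simply calls the result ``straightforward.''
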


Lemma~\ref{lemLRphi0} implies that we could without of loss of generality assume that $\phi=0$, as long as $\psi\neq 0$.\footnote{The case $\psi=0$ is not of practical interest, as it implies constant interest rates $r_t=\alpha$ and $\Pa^\ast=\Q$, which follows from \eqref{rLR} and \eqref{NDLR}.} The reason why we keep $\phi$ in the representation of the LR model~\eqref{mLR} is that it gives us the flexibility to specify the factor process $Z_t$ on a fixed state space $E$, as shown in Section~\ref{secspecLRNew}. Indeed, the state space is not invariant under the transform $Z'_t= Z_t-q$ in Lemma~\ref{lemLRphi0}.

Here is an immediate consequence of Lemma~\ref{lemLRphi0}.
\begin{corollary}\label{lemLRphi0X1}
An $m$-dimensional LR model~\eqref{mLR} for which there exists some $q\in\R^m$ such that
\begin{equation}\label{LGtypeeq}
\text{$b+\beta q=0$ and $\phi+\psi^\top q=0$}
\end{equation}
is observationally equivalent to the $m$-dimensional LG process $Z_t'=Z_t-q$ and $\zeta_t'=\zeta_t$ satisfying
\[ dZ_t' =  \beta Z_t' \,dt + dM^Z_t \quad\text{and}\quad \zeta_t'=\e^{-\alpha t}\psi^\top Z_t'.\]
\end{corollary}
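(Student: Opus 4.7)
The plan is to obtain the result as an immediate specialization of Lemma~\ref{lemLRphi0} once both defining equations~\eqref{LGtypeeq} are used. First, I would apply Lemma~\ref{lemLRphi0} with the given $q$, which uses only the second equation $\phi+\psi^\top q=0$. This produces the observationally equivalent LR model $Z_t'=Z_t-q$, $\zeta_t'=\zeta_t$, with dynamics
\[
 dZ_t' = (b' + \beta Z_t')\,dt + dM^Z_t,\qquad b' = b+\beta q,\qquad \zeta_t' = \e^{-\alpha t}\psi^\top Z_t'.
\]
Invoking now the first equation $b+\beta q=0$ gives $b'=0$, which is exactly the announced strictly linear SDE $dZ_t' = \beta Z_t'\,dt + dM^Z_t$. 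Observational equivalence to the original LR model is inherited directly from Lemma~\ref{lemLRphi0}.

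Next I would verify that $(Z_t',\zeta_t')$ qualifies as an $m$-dimensional LG process in the sense of Theorem~\ref{thmcharLG}{\ref{thmcharLG3new}}. Since $Z_t'$ differs from $Z_t$ by a constant, its local martingale part $M^{Z'}_t=M^Z_t$ is a (true) martingale by the LR-model assumption. The state price density reads $\zeta_t'=\e^{-\alpha t}\psi^\top Z_t'$, and this fits the format of Theorem~\ref{thmcharLG}{\ref{thmcharLG3new}} by selecting any invertible $Q\in\R^{m\times m}$ with $\psi^\top=\bm e_1^\top Q^{-1}$, equivalently $Q^\top\psi=\bm e_1$. Such a $Q$ exists whenever $\psi\neq 0$; the degenerate case $\psi=0$ need not be considered, since together with $\phi+\psi^\top q=0$ it would force $\phi=0$ and hence $\zeta_t\equiv 0$, contradicting positivity of the state price density. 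The companion process is then $X_t'=\tilde Q Z_t'/(\psi^\top Z_t')$, where $\tilde Q$ denotes the last $m-1$ rows of $Q^{-1}$.

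By the remark following Theorem~\ref{thmcharLG}, the implication \ref{thmcharLG3new}$\Rightarrow$\ref{thmcharLG1} holds without any linear-support assumption, so $(\zeta_t',X_t')$ is a genuine LG process, and by the naming convention stated in the subsequent remark we may call $(Z_t',\zeta_t')$ itself an $m$-dimensional LG process. There is really no hard step here: the whole argument amounts to noticing that when $b+\beta q=0$ \emph{and} $\phi+\psi^\top q=0$ hold simultaneously, the affine-to-linear reduction of Lemma~\ref{lemLRphi0} upgrades to a fully linear one, so that the dimension-increasing embedding of \eqref{LRasLGm1new} is not needed.
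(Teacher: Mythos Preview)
Your proof is correct and follows the same route as the paper, which simply records the corollary as an immediate consequence of Lemma~\ref{lemLRphi0}. Your second and third paragraphs, where you explicitly exhibit an invertible $Q$ and recover an $X_t'$ to justify the ``LG process'' label, are accurate but more than the paper requires: in the paper's working convention (see the Remark after Theorem~\ref{thmcharLG} and the usage throughout Section~\ref{secLRNew}), an $m$-dimensional LG process $(Z_t,\zeta_t)$ is precisely an LR model~\eqref{mLR} with $b=0$ and $\phi=0$, so once your first paragraph establishes $b'=0$ and $\phi'=0$ the designation is automatic.
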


\subsection{Proper LR models}
Corollary~\ref{lemLRphi0X1} motivates the following definition.
\begin{definition}
An $m$-dimensional LR model~\eqref{mLR} for which there does not exist any $q\in\R^m$ satisfying~\eqref{LGtypeeq} is called \emph{proper}.
\end{definition}
Hence a non-proper LR model is observationally equivalent to an LG process of the same dimension, which partly answers question {\bf Q1}. We will show in Theorem~\ref{thmLRasLGnew} below that also the converse holds. Hereto we first have to rule out some degenerate situations of the following kind. Given an $m$-dimensional LR model~\eqref{mLR} and any $m'>m$, we can trivially generate an observationally equivalent $m'$-dimensional LR model as follows: choose an arbitrary $(m'-m)$-dimensional semimartingale $Z''_t$ with linear drift,
\[ dZ''_t = (b'' + \beta'' Z_t'')\,dt + dM^{Z''}_t, \]
and define $Z'_t=(Z_t; Z''_t)$ and $\zeta_t'=\e^{-\alpha t} (\phi+\psi'^\top Z_t')=\zeta_t$ where $\psi'=(\psi;0)$. Then $Z'_t$ and $\zeta_t'$ obviously form an $m'$-dimensional LR model that is observationally equivalent to \eqref{mLR}. To avoid such redundancies, we want to exclude directions in $\R^m$ that are linearly unspanned by the bond prices~\eqref{PtTZ}. These are directions $\xi\in \R^m$ that when added to $Z_t$ on the right hand side of \eqref{PtTZ} do not affect $P(t,T)$ for any $T\ge t$. We call the space of all such directions the \emph{term structure kernel} and denote it by $\Ucal$. It is shown in \citeN[Theorem~1]{FilipovicLarssonTrolle14} that
\begin{equation}\label{TSKeq}
 \Ucal \supseteq \ker\left\{ \psi^\top \e^{\beta\tau},\,\tau\ge 0\right\}
\end{equation}
with equality if the short rate process $r_t$ given by \eqref{rLR} is not constant.\footnote{\label{footnote1}In view of \eqref{rLR} the short rate process $r_t$ is constant if (and only if) $\psi^\top$ is a left-eigenvector of $\beta$ with eigenvalue $\lambda$ satisfying $\psi^\top b=\lambda\phi$ (assuming that ${\rm aff}(Z_\cdot)=\R^m$). In this case, we have $r_t=\alpha-\lambda$ and the term structure kernel is $\Ucal=\R^m$, while the right hand side of~\eqref{TSKeq} equals $\ker\psi^\top$, a proper subspace of $\Ucal$.}

The following lemma shows that, after a dimension reduction if necessary, we can always and without loss of generality assume that the LR model (LG process) has full affine support (full linear support) and zero term structure kernel.

\begin{lemma} \label{L:ob eq minNEW}
Assume the $m$-dimensional LR model \eqref{mLR} (LG process \eqref{mLR}, with $b=0$ and $\phi=0$) has
\begin{enumerate}
  \item\label{L:ob eq minNEWi} non-full affine support, ${\rm aff}(Z_\cdot)\subsetneq \R^m$ (non-full linear support, ${\rm lin}(Z_\cdot)\subsetneq\R^m$),
\end{enumerate}
or
\begin{enumerate}
\setcounter{enumi}{1}
  \item\label{L:ob eq minNEWii} non-zero term structure kernel, $\Ucal\neq\{0\}$, and non-constant short rate process.
\end{enumerate}
Then there exists an observationally equivalent $m'$-dimensional LR model (LG process) $Z_t'$ and $\zeta_t'=\zeta_t$ with $m'<m$, full affine support, ${\rm aff}(Z'_\cdot)= \R^{m'}$ (full linear support, ${\rm lin}(Z'_\cdot)=\R^{m'}$) and zero term structure kernel, $\Ucal'=\{0\}$.
\end{lemma}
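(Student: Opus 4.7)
The plan is to treat the two hypotheses by separate dimension-reducing constructions and then iterate. In each case I produce an observationally equivalent LR model (LG process) of strictly smaller dimension, so at most $m$ iterations yield a model with both full affine (linear) support and zero term structure kernel.

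For hypothesis~(i), write ${\rm aff}(Z_\cdot) = z_0 + V$ with $V \subsetneq \R^m$ a linear subspace of dimension $m' < m$. The crucial algebraic step is $\beta V \subseteq V$ and $b + \beta z_0 \in V$, which I would establish by testing against any $\ell \in V^\perp$. Since $\ell^\top (Z_t - z_0) \equiv 0$, uniqueness of the Doob--Meyer decomposition of the semimartingale $\ell^\top Z_t$ forces $\int_0^t \ell^\top (b + \beta Z_s)\,ds \equiv 0$, so $\ell^\top (b + \beta Z_s) \equiv 0$; combined with the fact that $\{Z_s\}$ affinely spans $z_0 + V$, this yields both invariance conditions. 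Choosing any linear isomorphism $P \colon V \to \R^{m'}$, I set $Z'_t = P(Z_t - z_0)$ and transport the LR parameters so that $b' + \beta' Z'_t = P(b + \beta Z_t)$ and $\phi' + \psi'^\top Z'_t = \phi + \psi^\top Z_t$. Routine verification shows that the transported coefficients are well-defined thanks to the invariance, that $Z'_t$ is an $m'$-dimensional LR model with ${\rm aff}(Z'_\cdot) = \R^{m'}$, and that $\zeta'_t = \zeta_t$. In the LG case, $b = 0$, $\phi = 0$, and $0 \in V = {\rm lin}(Z_\cdot)$ allows the choice $z_0 = 0$, preserving $b' = 0$ and $\phi' = 0$.

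For hypothesis~(ii), non-constancy of the short rate turns \eqref{TSKeq} into an equality, so $\Ucal = \ker\{\psi^\top e^{\beta \tau} : \tau \geq 0\}$. A standard argument (taking derivatives at $\tau = 0$) identifies this as the largest $\beta$-invariant subspace contained in $\ker \psi^\top$. Let $\pi \colon \R^m \to \R^m/\Ucal$ be the quotient. Since $\Ucal$ is $\beta$-invariant and $\psi^\top$ vanishes on $\Ucal$, both $\beta$ and $\psi^\top$ descend to well-defined maps $\bar\beta$ and $\bar\psi^\top$ on the quotient. Setting $Z'_t = \pi(Z_t)$ with $b' = \pi(b)$, $\beta' = \bar\beta$, $\phi' = \phi$, $\psi' = \bar\psi$ yields an observationally equivalent LR model (LG process, since $b' = 0$ and $\phi' = 0$ in that case) of dimension $m' = m - \dim \Ucal < m$. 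The new term structure kernel is trivial: for any lift $\xi$, $\pi(\xi) \in \Ucal'$ forces $\bar\psi^\top \bar\beta^k \pi(\xi) = \psi^\top \beta^k \xi = 0$ for all $k$, hence $\xi \in \Ucal$ and $\pi(\xi) = 0$.

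For termination, the reduced model's short rate coincides with the original's (since $\zeta' = \zeta$), so non-constancy of the short rate is preserved; whenever the reduced model still fails one of the two properties, I apply the corresponding construction again. Dimension strictly decreases at each step, guaranteeing termination in finitely many steps at a model satisfying both properties. The main obstacle I expect is the $\beta$-invariance step in~(i): lifting the geometric constraint $Z_t \in z_0 + V$ into the algebraic invariance $\beta V \subseteq V$, which requires the Doob--Meyer argument above. The remaining work amounts to coordinate bookkeeping for the transported LR parameters.
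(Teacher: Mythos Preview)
Your proposal is correct and follows essentially the same strategy as the paper: reduce dimension by projecting onto the affine (linear) support in case~(i) and by quotienting out the term structure kernel in case~(ii), then iterate until both obstructions vanish. Your presentation is slightly more coordinate-free---you argue $\beta$-invariance of the support via Doob--Meyer and phrase (ii) as a quotient $\R^m/\Ucal$---whereas the paper works with explicit matrices $P,Q$ and proves the identity $(P_1\beta Q_1)^k = P_1\beta^k Q_1$ by induction, but the mathematical content is identical.
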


The following theorem provides a full characterization of proper LR models. It answers in the negative question {\bf Q1}, as proper LR models obviously exist.

\begin{theorem}\label{thmLRasLGnew}
Assume that the $m$-dimensional LR model \eqref{mLR} has zero term structure kernel, $\Ucal=\{0\}$, and full affine support, ${\rm aff}(Z_\cdot)=\R^m$. Then there does not exist any observationally equivalent lower dimensional LG process. Moreover, the following are equivalent:
\begin{enumerate}
\item\label{thmLRasLGnew1} the LR model \eqref{mLR} is proper;
\item \label{thmLRasLGnew2} there does not exist any observationally equivalent $m$-dimensional LG process;
\item\label{thmLRasLGnew3} the observationally equivalent $(m+1)$-dimensional LG process \eqref{LRasLGm1new} has zero term structure kernel.
\end{enumerate}
In either case, there also does not exist any observationally equivalent lower dimensional LR model.
\end{theorem}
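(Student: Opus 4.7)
The plan is to derive everything from two ingredients: a direct term-structure-kernel computation for the lifted LG process, and a uniqueness-of-minimal-realization argument. At the outset I note that since $\Ucal=\{0\}$ and $m\geq 1$, footnote~\ref{footnote1} rules out a constant short rate (which would force $\Ucal=\R^m$), so~\eqref{TSKeq} is an equality and $\ker\{\psi^\top e^{\beta\tau}:\tau\ge 0\}=\{0\}$.

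For (i) $\Leftrightarrow$ (iii), I compute the TSK $\Ucal'$ of the lifted $(m{+}1)$-dimensional LG process $Z'_t=(1;Z_t)$ from~\eqref{LRasLGm1new}, which has drift matrix $\beta'=\begin{pmatrix}0&0\\b&\beta\end{pmatrix}$ and loading $\psi'=(\phi;\psi)$. Since the short rate is unchanged, $\Ucal'=\ker\{\psi'^\top(\beta')^k:k\ge 0\}$. A one-line induction shows $(\beta')^k(c;w)=(0;\beta^{k-1}(bc+\beta w))$ for $k\ge 1$, so $(c;w)\in\Ucal'$ iff $\phi c+\psi^\top w=0$ and $bc+\beta w\in\Ucal=\{0\}$. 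The case $c=0$ degenerates to $w\in\Ucal=\{0\}$; nontrivial elements therefore have $c\neq 0$, and rescaling to $c=1$ gives $q:=w$ satisfying exactly~\eqref{LGtypeeq}. Conversely any such $q$ produces $(1;q)\in\Ucal'$. Hence $\Ucal'=\{0\}$ iff the LR model is proper.

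The remaining statements — no observationally equivalent lower-dimensional LR model (hence no lower-dimensional LG process) and (i) $\Leftrightarrow$ (ii) — I handle via uniqueness of minimal realizations. Given any observationally equivalent $\tilde m$-dimensional LR model, Lemma~\ref{L:ob eq minNEW} lets me assume WLOG that it is minimal (full affine support, zero TSK) of dimension $\tilde m'\le\tilde m$. Lifting both the candidate and the original via~\eqref{LRasLGm1new} and collapsing each by its TSK --- of dimension at most $1$, by the computation above --- produces two observationally equivalent minimal LG processes of respective dimensions $N^*\in\{m,m+1\}$ and $\tilde N^*\in\{\tilde m',\tilde m'+1\}$, the ``$+1$'' option occurring exactly when the corresponding LR is proper. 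The key technical input is uniqueness of minimal realization for LG processes: two minimal LGs with identical normalized state price densities are related by an invertible linear change of state variables, so in particular $N^*=\tilde N^*$. Comparing the affine supports of the two minimal LG factors (full in the ambient state space when the corresponding original is non-proper, contained in an affine hyperplane when proper) then rules out the residual case $\tilde m'=m-1$ via the incompatibility between a linear bijection and an affine-hyperplane image, forcing $\tilde m'\ge m$ and hence $\tilde m\ge m$. For (i) $\Rightarrow$ (ii), applying this uniqueness to a hypothetical observationally equivalent $m$-dim LG yields an affine bijection $\tilde Z_t=AZ_t+v$; the LG conditions $\tilde b=\tilde\phi=0$ then translate, via~\eqref{mLR}, into $q:=-A^{-1}v$ satisfying~\eqref{LGtypeeq}, contradicting properness. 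The reverse implication (not~(i)) $\Rightarrow$ (not~(ii)) is Corollary~\ref{lemLRphi0X1}.

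The main obstacle is making the uniqueness-of-minimal-realization principle rigorous: one needs to recover the Markov parameters $\psi^\top\beta^k$ (and the change-of-basis matrix) from the observable normalized state price density process, and show that the stochastic martingale parts $M^Z_t$ are intertwined by the same linear isomorphism. Concretely this reduces to a Hankel-rank argument in the spirit of Kalman minimal realization, adapted to the linear-drift-plus-martingale setting.
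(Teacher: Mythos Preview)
Your proof of (i)\,$\Leftrightarrow$\,(iii) via the explicit term-structure-kernel computation for the lifted process $(1;Z_t)$ is essentially the paper's argument; the paper works with $\psi'^\top\e^{\beta'\tau}$ rather than the powers $\psi'^\top(\beta')^k$, but this is cosmetic.

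For the remaining claims you take a genuinely different route. You invoke an abstract \emph{uniqueness of minimal LG realization} principle, whereas the paper proceeds by a direct computation. The paper proves a concrete lemma: if the original LR model has zero term structure kernel and some observationally equivalent $m'$-dimensional LR model \eqref{m'LRnew} is given, then matching $\E^\ast_t[\zeta_{t+\tau}]/\zeta_0$ against $\E^\ast_t[\zeta'_{t+\tau}]/\zeta'_0$ for $m$ suitable values of~$\tau$ (which is possible precisely because $\ker\{\psi^\top\e^{\beta\tau}:\tau\ge0\}=\{0\}$) yields the explicit relation
\[
Z_t + p \;=\; \frac{\zeta_0}{\zeta'_0}\,\e^{(\alpha-\alpha')t}\bigl(q + QZ'_t\bigr),
\]
with $q=0$ whenever the second model is an LG process. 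From this single identity all dimension bounds (no lower-dimensional LG; under properness no lower-dimensional LR) and the implication (i)\,$\Rightarrow$\,(ii) fall out by elementary rank and drift-matching arguments. Your Hankel/Kalman uniqueness principle, once made rigorous, would amount to exactly this lemma---so you are not bypassing the paper's key step but repackaging it abstractly while leaving its proof as the acknowledged ``main obstacle.'' Note also that your stated principle (``related by an invertible linear change of state variables'') is slightly too restrictive: as the displayed relation shows, one must in general allow a time-exponential factor $\e^{(\alpha-\alpha')t}$ between observationally equivalent LG representations, though this can be normalized away after the fact. The paper's concrete approach has the advantage of being entirely self-contained and of delivering the explicit intertwining map, from which the theorem follows without any appeal to realization theory.
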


\begin{remark}
As the proof of Theorem~\ref{thmLRasLGnew} reveals, the equivalence ${\ref{thmLRasLGnew1}}\Leftrightarrow{\ref{thmLRasLGnew3}}$ holds without the assumption that ${\rm aff}(Z_\cdot)=\R^m$.
\end{remark}

Theorem~\ref{thmLRasLGnew} stipulates that there exist non-proper LR models with zero term structure kernel and full affine support that are observationally equivalent to some lower dimensional LR models, which shows that the converse of Lemma~\ref{L:ob eq minNEW} does not hold. Indeed, we can construct such examples as follows. Modifying \eqref{LRasLGm1new}, for any $m$-dimensional LR model \eqref{mLR} we define the observationally equivalent $(m+1)$-dimensional LG process
\begin{equation}\label{LRasLGm1mod}
Z'_t  =  \e^{(\alpha-\alpha')t}  \begin{pmatrix} 1 ;   Z_t\end{pmatrix} \quad\text{and}\quad
 \zeta'_t = \e^{-\alpha' t}\psi'^\top Z'_t
\end{equation}
with $\psi'  = (\phi;\psi)$. It follows by inspection that $\zeta'_t=\zeta_t$ and that $Z'_t$ has strictly linear drift,
\[ dZ'_t = \beta' Z'_t \,dt + dM^{Z'}_t,\]
with $(m+1)\times (m+1)$-drift matrix
\[ \beta' = (\alpha-\alpha'){\rm Id}+\begin{pmatrix} 0 & 0 \\ b &  \beta \end{pmatrix} ,\]
and martingale part $ dM^{Z'}_t = \e^{(\alpha-\alpha')t} \begin{pmatrix} 0 ;  dM^Z_t\end{pmatrix}$. If $Z_t$ has full linear support ${\rm lin}(Z_\cdot)=\R^m$ then $Z'_t$ has full linear support ${\rm lin}(Z'_\cdot)=\R^{m+1}$. If $\alpha'=\alpha$, we are back to \eqref{LRasLGm1new} and $Z'_t$ takes values in the affine hyperplane $\{1\}\times\R^m\subset\R^{m+1}$, so that $\dim{\rm aff}(Z'_\cdot)<m+1$, see Remark~\ref{remLGm1}. But if $\alpha'\neq\alpha$ then $Z'_t$ can have full affine support ${\rm aff}(Z'_\cdot)=\R^{m+1}$ and zero term structure kernel, as the following example shows.

\begin{example}\label{exmfaff}
Let $m=1$ and consider the $1$-dimensional LR model given by the square-root diffusion $dZ_t = b\, dt + \sqrt{Z_t}\,dW_t $ and $\zeta_t = \e^{-\alpha t} (\phi +\psi Z_t)$, with $Z_0>0$ and parameters $b,\alpha,\phi,\psi>0$. Let $\alpha'>\alpha$, then the observationally equivalent $2$-dimensional LG process \eqref{LRasLGm1mod} has full affine support, ${\rm aff}(Z'_\cdot)=\R^2$, and zero term structure kernel, $\Ucal'=\{0\}$.
\end{example}

\subsection{Reducible LG processes}

Example \ref{exmfaff} shows an $m$-dimensional LG process (non-proper LR model) with zero term structure kernel and full affine support that is observationally equivalent to an $(m-1)$-dimensional LR model. We now formalize the class of all LG processes for which there exist observationally equivalent LR models of lower dimension.
\begin{definition}
An $m$-dimensional LG process for which there exists an observationally equivalent $m'$-dimensional LR model with $m'<m$ is called \emph{reducible}.
\end{definition}

The following theorem provides a full characterization of reducible LG processes. It also shows that every reducible LG process is of the form~\eqref{LRasLGm1new}, up to observational equivalence.

\begin{theorem}\label{thmrLGLRrepr}
Assume that the $m$-dimensional LG process \eqref{mLR}, with $b=0$ and $\phi=0$,  has zero term structure kernel, $\Ucal=\{0\}$, and full linear support, ${\rm lin}(Z_\cdot)=\R^m$. Then the following are equivalent:

\begin{enumerate}
  \item\label{thmLGLRrepr1} the LG process \eqref{mLR} is reducible;
  \item\label{thmLGLRrepr2} there exists some $v\in\R^m$ such that $v^\top Z_t = v^\top Z_0\e^{\lambda t}$ for some real constant $\lambda$ and $v^\top Z_0>0$;
  \item\label{thmLGLRrepr3} there exists an observationally equivalent $(m-1)$-dimensional LR model $Z_t'$ and $\zeta_t'$ such that the $m$-dimensional LG process given by $(1;Z_t')$ and $\zeta_t'$, as in \eqref{LRasLGm1new}, is observationally equivalent to the LG process \eqref{mLR}.
\end{enumerate}
In either case, any observationally equivalent $m'$-dimensional LR model $Z_t'$ and $\zeta_t'$ with $m'<m$ satisfies $m'=m-1$, is proper, has full affine support, ${\rm aff}(Z'_\cdot)=\R^{m-1}$, and zero term structure kernel, $\Ucal'=\{0\}$, and $v$ is a left eigenvector of $\beta$ with eigenvalue~$\lambda$.
\end{theorem}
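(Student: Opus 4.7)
The plan is to establish the three-way equivalence via the cycle (iii)$\Rightarrow$(i)$\Rightarrow$(ii)$\Rightarrow$(iii), where (iii)$\Rightarrow$(i) is immediate since (iii) exhibits an observationally equivalent $(m-1)$-dimensional LR model.

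For (ii)$\Rightarrow$(iii), I start from the identity $v^\top Z_t = v^\top Z_0\,\e^{\lambda t}$. Differentiating against the linear-drift SDE of $Z_t$ and invoking ${\rm lin}(Z_\cdot)=\R^m$ forces $v^\top\beta=\lambda v^\top$ and $v^\top M^Z\equiv 0$, so $v$ is a left eigenvector of $\beta$ with eigenvalue $\lambda$. Normalising $v^\top Z_0=1$, set $\tilde Z_t=\e^{-\lambda t}Z_t$, which has linear drift $(\beta-\lambda\Id)\tilde Z_t$ and lives in the $(m-1)$-dimensional affine hyperplane $H=\{z\in\R^m:v^\top z=1\}$. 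Fix any $w_0\in H$ and a linear isomorphism $V:\R^{m-1}\to\ker v^\top$, so that $\tilde Z_t=w_0+VZ'_t$ uniquely. Applying a left inverse $V^+$ of $V$ to the SDE of $\tilde Z_t$ gives
\[ dZ'_t = (b'+\beta' Z'_t)\,dt+dM^{Z'}_t, \qquad b'=V^+(\beta-\lambda\Id)w_0,\quad \beta'=V^+(\beta-\lambda\Id)V, \]
while the state price density rewrites as $\zeta_t=\e^{-(\alpha-\lambda)t}(\phi'+\psi'^\top Z'_t)$ with $\phi'=\psi^\top w_0$, $\psi'=V^\top\psi$; positivity $\phi'+\psi'^\top Z'_t=\e^{(\alpha-\lambda)t}\zeta_t>0$ is automatic. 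Converting this $(m-1)$-dim LR model via~\eqref{LRasLGm1new} produces exactly $(1;Z'_t)$ with the same $\zeta$, so (iii) holds.

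For (i)$\Rightarrow$(ii), take any observationally equivalent $m'$-dim LR model with $m'<m$ and apply Lemma~\ref{L:ob eq minNEW} to obtain an observationally equivalent $m''$-dim LR model $(Z'',\zeta'')$ with $m''\le m'$, full affine support, and zero TSK; the non-constant short rate required in Lemma~\ref{L:ob eq minNEW} is automatic because zero TSK combined with full linear support of the given $m$-dim LG process rules out a constant short rate by footnote~\ref{footnote1}. Theorem~\ref{thmLRasLGnew} then renders this reduced LR model proper, so its LG conversion $\tilde Z_t=(1;Z''_t)$ of dimension $m''+1$ has full linear support (because $Z''$ has full affine support) and zero TSK (by Theorem~\ref{thmLRasLGnew}{\ref{thmLRasLGnew3}}). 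The central step is the following uniqueness principle: two observationally equivalent LG processes that both enjoy full linear support and zero TSK must have equal dimension and be related by a scalar-exponential linear map. To prove it, I equate $\E^\ast_t[\zeta_T]=c\,\E^\ast_t[\tilde\zeta_T]$ with $c=\zeta_0/\tilde\zeta_0$ using the LG bond-price formula, obtaining
\[ \psi^\top\e^{\beta\tau}Z_t = c\,\e^{\lambda(t+\tau)}\tilde\psi^\top\e^{\tilde\beta\tau}\tilde Z_t,\qquad t,\tau\ge 0, \]
with $\lambda=\alpha-\tilde\alpha$. Zero TSK of $(Z,\zeta)$ together with equality in~\eqref{TSKeq} (available since the short rate is non-constant) ensures $\{\psi^\top\e^{\beta\tau}:\tau\ge 0\}$ spans $(\R^m)^\ast$, so one can select $\tau_1,\dots,\tau_m$ for which $\Phi:=(\psi^\top\e^{\beta\tau_i})_{i=1}^m\in\R^{m\times m}$ is invertible. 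Stacking the identity gives $Z_t=c\,\e^{\lambda t}R\,\tilde Z_t$ for the matrix $R:=\Phi^{-1}(\e^{\lambda\tau_i}\tilde\psi^\top\e^{\tilde\beta\tau_i})_{i=1}^m\in\R^{m\times\tilde m}$; the full linear support of $Z$ forces ${\rm range}(R)=\R^m$, hence $\tilde m\ge m$, and by symmetry $\tilde m=m$ and $R$ is invertible. Applied to our pair this yields $m=m''+1$, and reading off the constant first coordinate of $\tilde Z_t=(1;Z''_t)$ produces $v\in\R^m$ satisfying $v^\top Z_t=v^\top Z_0\,\e^{\lambda t}$ with $v^\top Z_0=1>0$, establishing (ii).

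The supplementary claims follow at once: $m''=m-1$ together with $m''\le m'<m$ forces $m'=m-1$, so Lemma~\ref{L:ob eq minNEW} did not actually reduce, meaning the original $m'$-dim LR model already has full affine support and zero TSK and is proper by Theorem~\ref{thmLRasLGnew}; the eigenvector property of $v$ is read off from (ii) exactly as at the start of the (ii)$\Rightarrow$(iii) argument. The main obstacle is the uniqueness principle for LG representations: handling the time-exponential correction $\e^{\lambda(t+\tau)}$ that bridges possibly distinct $\alpha$-parameters, and justifying the selection of the $\tau_i$'s from the zero-TSK hypothesis via~\eqref{TSKeq}, are the delicate parts.
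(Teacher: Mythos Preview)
Your cycle (iii)$\Rightarrow$(i) and (ii)$\Rightarrow$(iii) are correct and essentially match the paper. For (i)$\Rightarrow$(ii) you take a longer route than the paper, and there is a genuine gap in it.

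The paper's argument for (i)$\Rightarrow$(ii) applies its Lemma~\ref{thmnmLReqm'LR} (your ``uniqueness principle'', but stated for an LG process against an arbitrary LR model rather than LG against LG) \emph{directly} to the given observationally equivalent $m'$-dimensional LR model, without first reducing via Lemma~\ref{L:ob eq minNEW}. Since the original LG process has $b=0$, $\phi=0$, that lemma gives $Z_t=c\,\e^{(\alpha-\alpha')t}(q+QZ'_t)$ with $q\in\R^m$, $Q\in\R^{m\times m'}$; full linear support of $Z$ forces $m'+1\ge m$, hence $m'=m-1$, and any $v\in\ker Q^\top$ with $v^\top q>0$ delivers (ii). The supplementary claims then fall out: properness because an observationally equivalent $(m-1)$-dimensional LG process is excluded by Lemma~\ref{thmnmLReqm'LR}\ref{thmmLReqm'LR5} together with Corollary~\ref{lemLRphi0X1}, and full affine support plus zero term structure kernel by contradiction via Lemma~\ref{L:ob eq minNEW}.

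Your gap is the sentence ``Theorem~\ref{thmLRasLGnew} then renders this reduced LR model proper'' (and its echo in the supplementary-claims paragraph). Theorem~\ref{thmLRasLGnew} \emph{characterizes} properness but does not assert it: under zero term structure kernel and full affine support, properness is \emph{equivalent} to the nonexistence of an observationally equivalent LG process of the same dimension, which you have not ruled out. Without properness you cannot invoke Theorem~\ref{thmLRasLGnew}\ref{thmLRasLGnew3} to obtain zero term structure kernel for $(1;Z''_t)$, and then the ``by symmetry'' step in your uniqueness principle is unavailable. The fix is to drop the symmetric conclusion and use only the one-sided bound $\tilde m\ge m$, which needs zero term structure kernel only of the \emph{original} LG process. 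Applied to $\tilde Z_t=(1;Z'_t)$ (the LG conversion of the \emph{given} $m'$-dimensional LR model, with no preliminary reduction) this already gives $m'+1\ge m$, hence $m'=m-1$ and an invertible $R$, from which $v$ is read off exactly as in the paper. With this change the detour through Lemma~\ref{L:ob eq minNEW} becomes unnecessary and your argument collapses to the paper's.
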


The following example shows a non-reducible LG process, which answers in the negative question {\bf Q2}. It also shows that the mapping~\eqref{LRasLGm1new} is not surjective.

\begin{example}\label{exnrLG}
Consider the 2-dimensional LG process given by
\begin{align*}
  dZ_{1t}&= Z_{2t}\,dt\\
  dZ_{2t}&= c Z_{2t}\,dt + \sqrt{Z_{2t}}\,dW_t
\end{align*}
for some constant $c>0$ and a standard Brownian motion $W_t$, with initial value $Z_0\in (0,\infty)^2$, and $\zeta_t = \e^{-\alpha t} \bm 1^\top Z_t$. Accordingly, the drift matrix is
\[ \beta=\begin{pmatrix}
  0 & 1 \\ 0 & c
\end{pmatrix}.\]
It follows by inspection that the term structure kernel is zero, $\Ucal=\{0\}$. We claim that the LG process is not reducible. Indeed, for any nonzero $v\in\R^2$ we have that
\[ v^\top dZ_t = (v_1+c v_2) Z_{2,t}\,dt + v_2\sqrt{Z_{2,t}}\,dW_t, \]
which is not of the form $v^\top Z_0 \,d(\e^{\lambda t}) $ for any real $\lambda$. Whence $Z_t$ has full linear support, ${\rm lin}(Z_\cdot)=\R^2$, and Theorem~\ref{thmrLGLRrepr} yields the claim.
\end{example}

In Section~\ref{secLTF}, we show that reducible LG processes and proper LR models appear naturally in the context of long-term risk factorization in the spirit of \citeN{AlvarezJermann2005}.

\section{Long-term risk factorization}\label{secLTF}

The state price density specification in a proper LR model goes naturally with the long-term risk factorization due to \citeN{AlvarezJermann2005} and \citeN{HansenScheinkman2009} in Markovian environments, and extended to a general semimartingale environment in \citeN{qin_lin_17}.

The \emph{$T$-forward measure} $\Q^T\sim\Pa^\ast$ related to the state price density $\zeta_t$ is defined by the Radon--Nikodym density process $\frac{d\Q^T}{d\Pa^\ast}|_{\Fcal_t}=M(t,T)$ given by
\begin{equation}\label{MtTdef}
M(t,T)= \frac{\zeta_t P(t,T)}{\E^\ast[\zeta_0 P(0,T)]}=\frac{\E^\ast_t[\zeta_T]}{\E^\ast[\zeta_T]}\quad\text{for all $t\ge 0$,}
\end{equation}
where we set $P(t,T)=\E^\ast_t[\zeta_T]/\zeta_t$ for all $t\ge 0$, which amounts to investing the notional of the $T$-bond in the savings account at $t=T$, so that $M(t,T)=M(T,T)$ for $t>T$.\footnote{Equation~\eqref{MtTdef} corresponds to \citeN[Equation (2.2)]{qin_lin_17} for constant $\zeta_0$. Indeed, a standing assumption in \citeN[Sections 2--3]{qin_lin_17} is that $\Fcal_0$ is trivial under $\Pa^\ast$.} Under $\Q^T$ any asset price process $S_t$ discounted by the $T$-bond, $S_t/P(t,T)$, becomes a martingale, because the deflated price process $\zeta_t S_t$ is a $\Pa^\ast$-martingale.

\citeN[Theorem 3.1]{qin_lin_17} show that if the limit
\begin{equation}\label{assLFM}
\text{$M^\infty_t =\lim_{T\to\infty} M(t,T)$ exists in $L^1$ and $M^\infty_t>0$ for all $t\ge 0$,}
\end{equation}
then $M^\infty_t$ is a positive martingale with $\E^\ast[M^\infty_t]=1$ and defines the \emph{long forward measure} $\La$ which is equivalent to $\Pa^\ast$ on each $\Fcal_t$ with Radon--Nikodym density $M^\infty_t$.\footnote{The long forward measure $\La$ is only locally defined on each $\Fcal_t$ but not globally on $\Fcal_\infty$, unless $M^\infty_t$ is uniformly integrable.}\footnote{Even if $\La$ exists on $\Fcal_\infty$, the identity between instantaneous $T$-forward rates and conditional expectation of future short rates $r_T$ under the $T$-forward measure, $ f(t,T)=\E^T_t[r_T]=\E^\ast_t[M(T,T) r_T]$, does not carry over to infinite maturity $T\to\infty$. That is, $f(t,\infty)= \lim_{T\to\infty}\E^T_t[r_T]  \neq \lim_{T\to\infty}\E^\La_t[r_T]$, in general. An example is given by any proper one-dimensional LR model $dZ_t=(b+\beta Z_t)\,dt + \sigma\sqrt{Z_t}\,dW_t$ with $\beta<0$ that is ergodic with unconditional mean $\theta$ such that $\Pa^\ast=\La$ is the long-forward measure. If $Z_0$ has the stationary distribution of $Z_t$, then $\lim_{T\to\infty}\E^\ast_t[r_T] = \E^\ast[r_0]$. It then follows that $f(t,\infty)=\alpha$ and $\E^\ast[r_0]<\alpha$ by the convexity of the linear-rational function on the right hand side of~\eqref{rLR}.}
As a consequence, the \emph{long bond} exists,
\[ B^\infty_t = \lim_{T\to\infty}\frac{P(t,T)}{P(0,T)} = \frac{\zeta_0}{\zeta_t} \lim_{T\to\infty} \frac{M(t,T)}{M(0,T)}= \frac{\zeta_0}{\zeta_t} \frac{M^\infty_t}{M^\infty_0},\]
with limit in probability for all $t\ge 0$. $B^\infty_t$ can be interpreted as the gross return earned by time $t$ on the investment of one dollar at time zero in the zero-coupon bond of asymptotically long maturity.  Under $\La$ any asset price process $S_t$ has a smaller conditional expected log return than $B^\infty_t$. Indeed, Jensen's inequality shows
\[ \E^{\La}_t\left[\log \left(\frac{S_T}{S_t}\right)\right]-\E^{\La}_t\left[\log \left(\frac{B^\infty_T}{B^\infty_t}\right)\right] = \E^{\La}_t\left[\log \left(\frac{ S_T  B^\infty_t}{S_t  B^\infty_T}\right)\right]  \le\log \E^{\La}_t\left[\frac{ \zeta_T S_T}{ \zeta_t S_t} \frac{ M^\infty_t}{M^\infty_T}\right] =0 .\]
Hence $B^\infty_t$ is the $\La$-growth optimal portfolio.\footnote{\cite{det_rin_10} discuss the role of the long bond in the optimal portfolios of long-horizon investors, dealing in equities and bonds, with von Neumann--Morgenstern preferences.} We conclude that the state price density admits the long-term factorization
\[  \zeta_t  =  \frac{\zeta_0}{ B^\infty_t M^\infty_0}  M^\infty_t  .\]
The first factor $\zeta_0/(B^\infty_t M^\infty_0 )$ is referred to as \emph{transitory component} and the martingale $M^\infty_t$ as \emph{permanent component} in the terminology of \citeN{AlvarezJermann2005}. The former is the implied state price density with respect to the long forward measure $\La$. Identity $\zeta_t=\zeta_0/(B^\infty_t M^\infty_0 )$ holds if and only if $M^\infty_t=1$ for all $t\ge 0$, or equivalently if the auxiliary measure $\Pa^\ast=\La$ on each $\Fcal_t$. As the above derivation was under the assumption~\eqref{assLFM}, we are led to the following result.

\begin{lemma}\label{lemPilfm}
The auxiliary measure $\Pa^\ast$ is the long forward measure related to the state price density $\zeta_t$ if and only if $M(t,T)\to 1$ in $L^1$ as $T\to\infty$ for all $t\ge 0$. In this case the long bond is given by $B^\infty_t=\zeta_0/\zeta_t$.
\end{lemma}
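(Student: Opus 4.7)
The lemma is essentially a restatement of how the long forward measure is constructed from the densities $M(t,T)$, so my plan is to unwind the definitions and observe that the characterizing equation $\Pa^\ast=\La$ on each $\Fcal_t$ is precisely the statement that the Radon--Nikodym density $M^\infty_t$ equals $1$ a.s., which in turn is exactly the $L^1$ convergence $M(t,T)\to 1$.

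More concretely, first I would recall that by the cited result of \citeN{qin_lin_17}, the long forward measure $\La$ exists and is defined on each $\Fcal_t$ by the Radon--Nikodym density $M^\infty_t=\lim_{T\to\infty}M(t,T)$ whenever assumption \eqref{assLFM} holds, in which case $M^\infty_t$ is automatically a positive martingale with $\E^\ast[M^\infty_t]=1$. Therefore $\Pa^\ast$ and $\La$ coincide on $\Fcal_t$ if and only if $M^\infty_t=1$ almost surely.

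For the forward direction, I would assume that $\Pa^\ast$ is the long forward measure. Then by definition the limit $M^\infty_t$ exists in $L^1$, and since the density of $\La$ with respect to $\Pa^\ast$ on $\Fcal_t$ equals $1$, we have $M^\infty_t=1$. Thus $M(t,T)\to 1$ in $L^1$ as $T\to\infty$. For the converse, I would suppose $M(t,T)\to 1$ in $L^1$ for all $t\ge 0$. Then the limit $M^\infty_t$ exists in $L^1$ and is equal to $1$, in particular strictly positive, so \eqref{assLFM} is automatically satisfied. The long forward measure $\La$ thus exists, and its density against $\Pa^\ast$ on $\Fcal_t$ is $M^\infty_t=1$, whence $\La=\Pa^\ast$ on each $\Fcal_t$.

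Finally, the long bond formula follows by substituting $M^\infty_t=M^\infty_0=1$ into the already derived expression
\[ B^\infty_t = \frac{\zeta_0}{\zeta_t}\,\frac{M^\infty_t}{M^\infty_0}, \]
giving $B^\infty_t=\zeta_0/\zeta_t$. There is no real obstacle here beyond being careful that the $L^1$ convergence in the hypothesis of the converse direction suffices to invoke \eqref{assLFM}, which it does since the constant $1$ is positive, so the conditions for the Qin--Linetsky construction of $\La$ hold trivially.
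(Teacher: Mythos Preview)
Your proposal is correct and matches the paper's approach. The paper does not give a separate proof of this lemma in the appendix; it is presented as an immediate consequence of the discussion preceding its statement, namely that the long-term factorization $\zeta_t=\zeta_0/(B^\infty_t M^\infty_0)\,M^\infty_t$ reduces to $\zeta_t=\zeta_0/B^\infty_t$ if and only if $M^\infty_t=1$ for all $t\ge0$, which is exactly the argument you have spelled out in detail.
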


This has practical implications for building models. Suppose we have specified the state price density $\zeta_t$ such that the auxiliary measure $\Pa^\ast$ is the long forward measure. The state price density $\zeta^\Pa_t $ with respect to the objective measure $\Pa$ is then obtained though the long-term factorization $\zeta^\Pa_t=\zeta_t \hat M_t$ by an exogenous specification of the permanent component $\hat M_t$. This is the approach in \citeN{FilipovicLarssonTrolle14}.\footnote{\citeN{qin_lin_18} estimate the permanent component in the long-term factorization using U.S.\ Treasury data. They find that the empirically estimated permanent component is highly volatile and cannot be omitted. This is consistent with the empirical findings of \citeN{FilipovicLarssonTrolle14} based on swap data that incorporating the permanent component is critical for generating realistic risk premium dynamics.}

We now study the long-term risk factorization in LR models. The \emph{long-term yield} is defined by
\[
y_\infty(t) = - \lim_{T\to\infty} \frac{1}{T-t}\log P(t,T),
\]
with limit in probability whenever it exists.

\begin{lemma}\label{lemLRnr}
Assume that the $m$-dimensional LR model \eqref{mLR} is proper and that all eigenvalues of the drift matrix $\beta$ have negative real part. Then $\Pa^\ast$ is the long forward measure and the long-term yield is constant equal to $y_\infty(t) = \alpha$.
\end{lemma}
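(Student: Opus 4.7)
The plan is to extract both claims from the long-time asymptotics of the explicit bond-price formula~\eqref{PtTZ}. Writing $\tau=T-t$ and making the substitution $u=\tau-s$, the numerator of the linear-rational factor becomes
\[ N(t,\tau)\;:=\;\phi+\psi^\top\int_0^\tau e^{\beta u}b\,du+\psi^\top e^{\beta\tau}Z_t.\]
Under the hypothesis that every eigenvalue of $\beta$ has strictly negative real part, $\beta$ is invertible and $\|e^{\beta u}\|$ decays exponentially in $u$, so the improper integral $\int_0^\infty e^{\beta u}b\,du$ converges to $-\beta^{-1}b$ and $e^{\beta\tau}Z_t\to 0$. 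Consequently $N(t,\tau)\to c:=\phi-\psi^\top\beta^{-1}b$ pointwise in $\omega$ as $\tau\to\infty$.

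The crux of the proof, and what I expect to be the main obstacle, is to show $c>0$, since it is the only place where the proper hypothesis is used. Nonnegativity will follow from the observation that the numerator of the ratio in~\eqref{PtTZ}, which equals $N(t,\tau)$, must be strictly positive for every finite $\tau\ge 0$: this is because $P(t,T)>0$ and the denominator $\phi+\psi^\top Z_t=e^{\alpha t}\zeta_t>0$; passing to the limit yields $c\ge 0$. Nonvanishing will come from properness: since $\beta$ is invertible, the unique $q$ solving $b+\beta q=0$ is $q=-\beta^{-1}b$, and by the proper hypothesis we must have $\phi+\psi^\top q = c\neq 0$. Combining the two gives $c>0$.

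With $c>0$ in hand, the long-term yield follows directly from~\eqref{PtTZ}:
\[y_\infty(t)\;=\;-\lim_{\tau\to\infty}\frac{1}{\tau}\log P(t,T)\;=\;\alpha-\lim_{\tau\to\infty}\frac{1}{\tau}\log\frac{N(t,\tau)}{\phi+\psi^\top Z_t}\;=\;\alpha,\]
since the logarithm of a quantity with a finite, strictly positive limit is bounded. For the long forward measure identification, by Lemma~\ref{lemPilfm} it suffices to prove $M(t,T)\to 1$ in $L^1$ for every $t\ge 0$. Using $\zeta_tP(t,T)=\E^\ast_t[\zeta_T]$ together with the linear-drift formula $\E^\ast_t[Z_T]=e^{\beta(T-t)}Z_t+\int_0^{T-t}e^{\beta u}b\,du$, one computes
\[M(t,T)\;=\;\frac{\E^\ast_t[\zeta_T]}{\E^\ast[\zeta_T]}\;=\;\frac{N(t,T-t)}{\phi+\psi^\top e^{\beta T}\E^\ast[Z_0]+\psi^\top\int_0^T e^{\beta u}b\,du},\]
with deterministic denominator. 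Both numerator and denominator converge to $c>0$, so $M(t,T)\to 1$ almost surely. Since $M(t,T)\ge 0$ and $\E^\ast[M(t,T)]=1=\E^\ast[1]$, Scheff\'e's lemma then upgrades this to $L^1$ convergence, finishing the proof.
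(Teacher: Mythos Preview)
Your proof is correct and follows essentially the same route as the paper: both exploit the exponential decay of $e^{\beta\tau}$ to show that $\E^\ast_t[\phi+\psi^\top Z_T]\to c=\phi-\psi^\top\beta^{-1}b$, invoke properness with $q=-\beta^{-1}b$ to get $c\neq 0$, and conclude via Lemma~\ref{lemPilfm}. Your argument is in fact slightly more complete in two places: you supply the positivity step $c\ge 0$ (from $N(t,\tau)>0$) that the paper leaves implicit when asserting $c>0$, and you obtain the $L^1$ convergence of $M(t,T)$ via Scheff\'e's lemma rather than the paper's direct bound $\|\psi^\top e^{\beta(T-t)}(Z_t+\beta^{-1}b)\|_{L^1}\to 0$.
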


The following trivial example shows that the converse of Lemma~\ref{lemLRnr} does not hold.

\begin{example}
Consider any deterministic $1$-dimensional LR model given by $dZ_t=(b+\beta Z_t)\,dt$ and $\zeta_t=\e^{-\alpha t}(\phi+\psi Z_t)$. Then trivially $\Pa^\ast$ is the long forward measure, irrespective of the sign of $\beta$ or whether the LR model is proper. Moreover, the long-term yield is constant equal to $y_\infty(t) = \alpha-\beta^+$. In particular, for $\beta=0$ we obtain $y_\infty(t) = \alpha$, hence the converse of Lemma~\ref{lemLRnr} does not hold.
\end{example}

The following example shows a proper LR model for which $\Pa^\ast$ is the long forward measure but the long-term yield does not exist. It shows that the assumption in Lemma~\ref{lemLRnr} that the eigenvalues of $\beta$ have negative real part cannot be relaxed for asserting the existence of the long-term yield. It further shows that the assumption in \citeN[Theorem 3.2]{qin_lin_17} cannot be relaxed.

\begin{example}
Consider the deterministic 3-dimensional LR model given by
\[
dZ_t = \begin{pmatrix}0 &\varphi& 0\\-\varphi &0 &0\\0& 0& -\kappa\end{pmatrix}Z_t\,dt, \qquad Z_0 = \begin{pmatrix} 1\\1\\1 \end{pmatrix},
\]
for some $\varphi>0$ and $\kappa>0$, so that
\[
Z_t=\begin{pmatrix} \cos(\varphi t)-\sin(\varphi t) \\ \cos(\varphi t)+\sin(\varphi t) \\ \e^{-\kappa t} \end{pmatrix}.
\]
Let $\alpha\in\R$, $\phi=1$, and $\psi=(1/2;1/2;1)$. The state price density is then
\[
\zeta_t = \e^{-\alpha t}(1 + \cos(\varphi t) + \e^{-\kappa t}),
\]
which is positive for all $t\ge0$. Inspection shows that this LR model is proper. Because it is deterministic, $\Pa^\ast$ is the long forward measure. On the other hand, we have $P(0,T)=\zeta_T/\zeta_0$ and hence
\begin{equation} \label{eq:no y_infty}
-\frac{1}{T}\log P(0,T) = \alpha - \frac{1}{T}\log(1 + \cos(\varphi T) + \e^{-\kappa T}) + \frac{1}{T}\log 3.
\end{equation}
Letting $T$ tend to infinity along the sequence $T_n=2n\pi /\varphi$, the right-hand side of~\eqref{eq:no y_infty} converges to $\alpha$. Choosing instead the sequence $T_n=(2n+1)\pi/\varphi$ we obtain the limit $\alpha+\kappa$. Thus the long-term yield at time zero, $y_\infty(0)$, does not exist.
\end{example}

Our main result of this section shows that LG processes under the long forward measure are always reducible, under some slightly stronger linear support condition.

\begin{theorem}\label{thmGREAT}
Assume that the $m$-dimensional LG process \eqref{mLR}, with $b=0$ and $\phi=0$, has zero term structure kernel, $\Ucal=\{0\}$, that ${\rm lin}(\e^{-\beta \cdot}Z_\cdot)=\R^m$, and that $\Pa^*$ is the long forward measure. Then the following hold:
\begin{enumerate}
\item\label{thmGREAT1} the LG process \eqref{mLR} is reducible,
\item\label{thmGREAT2} the long-term yield is constant equal to $y_\infty(t) = \alpha-\lambda$, where $\lambda$ is the real eigenvalue of $\beta$ given in Theorem~\ref{thmrLGLRrepr}, and $\lambda$ is larger than or equal to the real parts of all eigenvalues of $\beta$,
\item\label{thmGREAT3} $Z_t$ has full linear support, ${\rm lin}(Z_\cdot)=\R^m$.
\end{enumerate}
\end{theorem}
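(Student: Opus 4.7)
The plan is to reduce the problem to a deterministic asymptotic analysis of the row vector $\phi_T := \psi^\top\e^{\beta T}$. By Lemma~\ref{lemPilfm} the long-forward-measure hypothesis is equivalent to $M(t,T)\to 1$ in $L^1$ as $T\to\infty$ for each $t\ge 0$; since $b=0$ makes $\tilde Z_t := \e^{-\beta t}Z_t$ a $\Pa^\ast$-martingale, the LG bond-pricing identity reads
\[
M(t,T) = \frac{\psi^\top\e^{\beta(T-t)}Z_t}{\psi^\top\e^{\beta T}Z_0} = \frac{\phi_T \tilde Z_t}{\phi_T Z_0}.
\]
Via the real Jordan decomposition of $\beta$, $\phi_T$ is a finite sum of terms of the form $T^p\e^{\mathrm{Re}(\mu)T}\bigl(\cos(\mathrm{Im}(\mu)T)\,a^\top+\sin(\mathrm{Im}(\mu)T)\,b^\top\bigr)$ indexed by eigenvalues $\mu$ of $\beta$. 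Let $\lambda$ denote the largest real part of any eigenvalue of $\beta$. The hypothesis $\Ucal=\{0\}$ implies $\psi^\top P_\lambda\neq 0$ (where $P_\lambda$ is the spectral projection onto the generalized $\lambda$-eigenspace), so the $\lambda$-level contribution to $\phi_T$ is non-trivial.

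The core spectral claim is that no complex conjugate pair $\lambda\pm i\omega$ with $\omega\neq 0$ contributes at leading order. Otherwise, extracting subsequences $T_n\to\infty$ with $(\cos(\omega T_n),\sin(\omega T_n))$ converging to arbitrary points on the unit circle and using $M(t,T_n)\to 1$ would force
\[
a^\top\tilde Z_t=a^\top Z_0, \qquad b^\top\tilde Z_t=b^\top Z_0 \qquad\text{a.s.\ for all }t\ge 0,
\]
where $a\pm ib$ is the associated complex left eigenvector of $\beta$. Since $a$ and $b$ are linearly independent over $\R$, these two constraints confine $\tilde Z_\cdot$ to an affine subspace whose linear span has dimension at most $m-1$, contradicting ${\rm lin}(\e^{-\beta\cdot}Z_\cdot)=\R^m$. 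Hence the leading asymptotic of $\phi_T$ has the form $T^p\e^{\lambda T}v^\top/p!$ with $v$ a nonzero real left eigenvector of $\beta$ for the real eigenvalue $\lambda$ (the top of the associated Jordan chain). A single-constraint variant of the same argument rules out $v^\top Z_0=0$: otherwise $\phi_T Z_0$ would be sub-leading while $\phi_T \tilde Z_t$ would not, forcing $v^\top\tilde Z_t=0$ a.s.\ and hence $\tilde Z_\cdot\subseteq\ker v^\top$, again contradicting full linear span. After replacing $v$ by $-v$ if necessary, $v^\top Z_0>0$, and $\tilde\phi_T:=\phi_T/(\phi_T Z_0)\to v^\top/(v^\top Z_0)$.

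Passing to the $L^1$-limit in $M(t,T)=\tilde\phi_T \tilde Z_t\to 1$ yields $v^\top\tilde Z_t=v^\top Z_0$ a.s.\ for all $t\ge 0$, and using $v^\top\e^{-\beta t}=\e^{-\lambda t}v^\top$ this becomes $v^\top Z_t=\e^{\lambda t}v^\top Z_0$ with $v^\top Z_0>0$. To invoke Theorem~\ref{thmrLGLRrepr}{\ref{thmLGLRrepr2}} and conclude reducibility~{\ref{thmGREAT1}}, I first establish~{\ref{thmGREAT3}}: the linear span $V:={\rm lin}(Z_\cdot)$ is $\beta$-invariant, since applying the quotient map $\pi:\R^m\to\R^m/V$ to the SDE $dZ_t=\beta Z_t\,dt+dM^Z_t$ and using $\pi(Z_t)\equiv 0$ together with uniqueness of the semimartingale decomposition gives $\pi(\beta Z_t)=0$ for all $t$, whence $\beta V\subseteq V$. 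Then $\e^{-\beta t}V=V$, so ${\rm lin}(\e^{-\beta\cdot}Z_\cdot)\subseteq V$, and the hypothesis forces $V=\R^m$. For~{\ref{thmGREAT2}}, the same leading asymptotic gives $\log\bigl(\psi^\top\e^{\beta(T-t)}Z_t\bigr)/(T-t)\to\lambda$ a.s.\ (since $v^\top Z_t=\e^{\lambda t}v^\top Z_0>0$), so $-\log P(t,T)/(T-t)\to\alpha-\lambda$; and $\lambda\ge\mathrm{Re}(\mu)$ for every eigenvalue $\mu$ of $\beta$ by construction.

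The main obstacle is the subsequential spectral argument: carefully using the real Jordan form to isolate the leading asymptotic of $\phi_T$ and ruling out oscillatory complex contributions via the two-constraint incompatibility with the full-linear-span hypothesis on $\e^{-\beta\cdot}Z_\cdot$.
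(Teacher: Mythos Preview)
Your route is genuinely different from the paper's, and it is worth spelling out the contrast. The paper never touches the Jordan form of $\beta$. Instead it normalizes
\[
F(T)=\frac{\psi^\top\e^{\beta T}}{\psi^\top\e^{\beta T}\E^\ast[Z_0]},
\]
uses the $L^1$-convergence $F(T)\,\e^{-\beta t}Z_t\to 1$ together with ${\rm lin}(\e^{-\beta\cdot}Z_\cdot)=\R^m$ to force the deterministic vector $F(T)$ to converge to some $v^\top$, and then observes that $F$ satisfies the Riccati-type ODE $F'(T)=F(T)\beta-(F(T)\beta\,\E^\ast[Z_0])F(T)$. Since both $F(T)$ and $F'(T)$ converge, the limit of $F'(T)$ must be zero, which immediately gives $v^\top\beta=\lambda v^\top$ with $\lambda=v^\top\beta\,\E^\ast[Z_0]$. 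The eigenvalue inequality is then obtained a posteriori via an auxiliary lemma (bounding $\|\e^{\beta\tau}\|$ using $\Ucal=\{0\}$). Your approach instead fixes $\lambda$ a priori as the spectral abscissa and extracts the leading Jordan term of $\phi_T$; this makes the eigenvalue inequality automatic and gives an alternative proof of~{\ref{thmGREAT3}} via $\beta$-invariance of ${\rm lin}(Z_\cdot)$, which is nice. The paper's argument is considerably shorter precisely because it bypasses all spectral casework.

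Two points deserve attention. First, a minor correction: the denominator of $M(t,T)$ is $\psi^\top\e^{\beta T}\E^\ast[Z_0]$, not $\psi^\top\e^{\beta T}Z_0$; the paper allows $Z_0$ to be random, and your formula for $M(t,T)$ is only valid when $\Fcal_0$ is trivial. Second, your ``core spectral claim'' is where the real work hides, and your sketch does not fully close it. The subsequential argument you describe implicitly assumes that along the subsequence $T_n$ the normalizing denominator $T_n^{-p}\e^{-\lambda T_n}\phi_{T_n}\E^\ast[Z_0]$ stays bounded away from zero, so that the ratio has a well-defined limit. But if, say, the only leading contribution is a complex pair and $a^\top\E^\ast[Z_0]=b^\top\E^\ast[Z_0]=0$, the denominator is of strictly lower order and the argument must instead show that the \emph{numerator}'s leading term vanishes identically (which then gives the constraint $a^\top\tilde Z_t=b^\top\tilde Z_t=0$ rather than $=a^\top Z_0,\,b^\top Z_0$). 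Similarly, with several complex pairs at the same real part one must argue via almost-periodicity that the family of subsequential limits still produces enough independent linear constraints to violate ${\rm lin}(\e^{-\beta\cdot}Z_\cdot)=\R^m$. All of this can be made to work, but it is a genuine case analysis that your proposal flags as ``the main obstacle'' without resolving; the paper's ODE trick avoids it entirely.
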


The following example shows a non-reducible LG process for which $\Pa^\ast$ is not the long forward measure. In fact, it does not admit a long forward measure at all and its long-term yield exists but is not constant.

\begin{example}
Consider the non-reducible 2-dimensional LG process $Z_t$ and $\zeta_t$ given in Example~\ref{exnrLG}. The process $Z_t$ takes values in $(0,\infty)\times [0,\infty)$. Denote by $\tau_0=\inf\{ t\mid Z_{2t}=0\}$ the hitting time of zero of the square-root diffusion $Z_{2t}$. It satisfies $\Pa^\ast[\tau_0<\infty]>0$ and zero is an absorbing point, $Z_{2t}=0$ for all $t\ge \tau_0$, see, e.g., \citeN[Exercise~10.12]{fil_09}. An elementary calculation shows
\[ \e^{\beta t} = \begin{pmatrix}
  1 & \frac{\e^{c t}-1}{c} \\ 0 & \e^{ct}
\end{pmatrix} ={\rm Id} + \begin{pmatrix}
  0 & \frac{\e^{c t}-1}{c} \\ 0 & \e^{ct}-1
\end{pmatrix},\]
for all real $t$, so that
\[ \E^\ast_t[\zeta_T] = \e^{-\alpha T}\left(\bm 1^\top Z_t + Z_{2t} \left(\e^{c(T-t)}-1\right)\left( 1/c +1\right)\right) \]
and ${\rm lin}(\e^{-\beta \cdot}Z_\cdot)=\R^2$. Hence the linear support condition of Theorem~\ref{thmGREAT} is met. We obtain that the long-term yield exists but is not constant and non-decreasing,
\begin{align*}
   y_\infty(t) &= -\lim_{T\to\infty} \frac{1}{T-t} \log\left( \E^\ast_t[\zeta_T]/\zeta_t\right) \\
   & = \alpha - \lim_{T\to\infty} \frac{1}{T-t} \left(\log\left( \bm 1^\top Z_t + Z_{2t} \left(\e^{c(T-t)}-1\right)\left( 1/c +1\right) \right) -\log(\zeta_t)\right)\\
   &=\begin{cases} \alpha-c,& \text{if $t<\tau_0$} \\ \alpha,&\text{if $t\ge \tau_0$}\end{cases}
\end{align*}
with limits in probability. This is consistent with the Dybvig--Ingersoll--Ross theorem that asserts that the long-term yield can never fall under the absence of arbitrage, see \citeN{dyb_ing_ros_96}. Similarly, we see that the long forward measure does not exist. Indeed,
\begin{align*}
  M(t,T) &= \frac{\bm 1^\top Z_t + Z_{2,t} \left(\e^{c(T-t)}-1\right)\left( 1/c +1\right)}{\bm 1^\top \E^\ast[Z_0] + \E^\ast[Z_{2,0}] \left(\e^{cT}-1\right)\left( 1/c +1\right)}\to \frac{\e^{-ct} Z_{2,t}}{\E^\ast[Z_{2,0}]}=M^\infty_t
\end{align*}
in $L^1$ as $T\to\infty$. The martingale $M^\infty_t$ becomes zero for $t\ge\tau_0$, so that \eqref{assLFM} does not hold and $M^\infty_t$ does not define an equivalent measure on $\Fcal_t$ for any $t\ge 0$.

\end{example}

\section{Conclusion}\label{secconc}

We have reviewed LG processes and LR models. We have shown that every LR model can be represented as an LG process, and vice versa, subject to a dimensionality adjustment which we have studied in detail. This has useful practical implications. The direct specification of an LG process is arguably a difficult task for the modeler---even more so if interest rates were to be bounded below. LR models can be easily specified and allow for direct control of the lower bound on interest rates. Moreover, LR models appear naturally in the context of the long-term risk factorization due to \citeN{AlvarezJermann2005}, \citeN{HansenScheinkman2009}, and \citeN{qin_lin_17}. The range of flexible specifications of LR models inducing bounded below interest rates is wide and will be the subject of future research in empirical asset pricing.

\begin{appendix}

\section{Proof of Theorem~\ref{thmcharLG}}

${\ref{thmcharLG1}}\Rightarrow{\ref{thmcharLG3}}$: it is readily seen that $(\zeta_t,X_t)$ forms an LG process if and only if
\begin{equation}\label{eqXXnew}
\E^\ast_t[Y_T]=\Mcal(T-t) Y_t
\end{equation}
for all $0\le t\le T<\infty$, for the continuously differentiable matrix-valued function
\[ \Mcal(\tau)=\begin{pmatrix}
\Acal(\tau) & \Bcal(\tau) \\ \Ccal(\tau) & \Dcal(\tau)
\end{pmatrix}.\]
Taking nested conditional expectations, this implies
\[ \Mcal(\tau_1+\tau_2) Y_t = \E^\ast_t[Y_{t+\tau_1+\tau_2}]=\E^\ast_t[\E^\ast_{t+\tau_2}[Y_{t+\tau_1+\tau_2}]]=\Mcal(\tau_1)\Mcal(\tau_2)Y_t \]
for all $\tau_1,\tau_2,t\ge 0$. As ${\rm lin}(Y_\cdot)=\R^{n+1}$, we conclude that $\Mcal(\tau_1+\tau_2)=\Mcal(\tau_1)\Mcal(\tau_2)$ for all $\tau_1,\tau_2\ge 0$ and hence the first equality in \eqref{eq8New} for some $\kappa\in\R^{(n+1)\times (n+1)}$. It follows that $N_t=\e^{-\kappa t}Y_t$ is a martingale. Applying It\^o's formula to $Y_t=\e^{\kappa t}N_t$ yields \eqref{eqXMXY}, so that $Y_t$ has a drift, and where $dM^Y_t = \e^{\kappa t}\,dN_t$ is a martingale. Indeed, integration by parts gives
\[ M^Y_T - M^Y_t = \e^{\kappa T} N_T - \e^{\kappa t} N_t - \int_t^T \e^{\kappa s} \kappa N_s \,ds\]
for $t \le T$. Taking $\Fcal_t$-conditional expectation and changing the order of integration, justified by the fact that $\E^\ast[ \| \e^{\kappa s} \kappa N_s \| ]\le \e^{\|\kappa\|T}\|\kappa\|\E^\ast[ \| N_T \| ] <\infty$ for all $s\le T$, we obtain
\[ \E^\ast_t[ M^Y_T - M^Y_t ] = \left( \e^{\kappa T} - \e^{\kappa t} - \int_t^T \e^{\kappa s} \kappa \,ds \right) N_t = 0,\]
as desired.

Moreover, after an equivalent change of measure we see that $N'_t = N_t/D_t = \e^{-\kappa t -\int_0^t r_s \,ds} X'_t$ is a $\Q$-martingale, where we denote $X'_t=(1;X_t)$. Applying It\^o's formula to $X'_t=\e^{\kappa t +\int_0^t r_s \,ds}N'_t$ shows that $X'_t$, and thus $X_t$, has a $\Q$-drift that is of the form
\[ \mu^{X',\Q}_t = (0;\mu^{X,\Q}_t) = (\kappa + r_t) X'_t . \]
Matching terms implies \eqref{rQeqN} and \eqref{muQeqN}.

${\ref{thmcharLG3}}\Rightarrow{\ref{thmcharLG1}}$: If $Y_t$ satisfies \eqref{eqXMXY} with a martingale $M^Y_t$ then it readily follows that
\[ \E^\ast_t [ Y_T ] = \e^{(T-t)\kappa}Y_t ,\]
that is \eqref{eqXXnew}, which implies {\ref{thmcharLG1}}.

${\ref{thmcharLG3}}\Leftrightarrow{\ref{thmcharLG3new}}$: This follows from the relation $Z_t = \e^{\alpha t} Q Y_t$, which also implies the second equality in \eqref{eq8New}. The proof of Theorem~\ref{thmcharLG} is thus complete.

\section{Proof of Lemma~\ref{L:ob eq minNEW}}

We prove the LR case of the lemma, and consider an $m$-dimensional LR model~\eqref{mLR}. We prove the two statements
\begin{enumerate}
\item\label{L:ob eq minNEW:proof1} if ${\rm aff}(Z_\cdot)\subsetneq \R^m$, then there exists an observationally equivalent $m'$-dimensional LR model $Z_t'$ and $\zeta_t'=\zeta_t$ with $m'<m$ and ${\rm aff}(Z'_\cdot)= \R^{m'}$;
\item\label{L:ob eq minNEW:proof2} if $\Ucal\neq\{0\}$ and the short rate process is non-constant, then there exists an observationally equivalent $m'$-dimensional LR model $Z_t'$ and $\zeta_t'=\zeta_t$ with $m'<m$ and $\Ucal'=\{0\}$.
\end{enumerate}

We first prove \ref{L:ob eq minNEW:proof1}, and assume that ${\rm aff}(Z_\cdot)\subsetneq \R^m$. Define $A={\rm aff}(Z_\cdot)$ and $m'=\dim A<m$. Let $z\mapsto P(z-q)$ with $P\in\R^{m'\times m}$ and $q\in A$ be an invertible affine map from $A$ to $\R^{m'}$. Denote the inverse map by $z'\mapsto Qz'+q$ with $Q\in\R^{m\times m'}$. Note that $Z_t\in A$ for all $t\ge0$. We now specify the new factor process $Z'_t=P(Z_t-q)$, which has the linear drift dynamics
\[
dZ'_t = (P (b+\beta q) + P\beta Q Z'_t)\,dt + P \,dM^Z_t.
\]
Furthermore, the state price density $\zeta_t$ can be written
\[
\zeta_t = \e^{-\alpha t}(\phi + \psi^\top q + \psi^\top QP Z'_t).
\]
By specifying $\zeta'_t=\zeta_t$ we thus obtain an observationally equivalent $m'$-dimensional LR model. It is clear that ${\rm aff}(Z'_\cdot)=\R^{m'}$, as otherwise $Q({\rm aff}(Z'_\cdot))+q$ would be a proper affine subspace of $A={\rm aff}(Z_\cdot)$ which contains $Z_t$ for all $t\ge0$, a contradiction. The observationally equivalent $m'$-dimensional LR model $Z_t'$ and $\zeta_t'$ thus satisfies $m'<m$ and ${\rm aff}(Z'_\cdot)=\R^{m'}$, which proves \ref{L:ob eq minNEW:proof1}.

We next prove \ref{L:ob eq minNEW:proof2}, and assume that $\Ucal\neq\{0\}$ and that the short rate process is non-constant. We then have equality in \eqref{TSKeq}, so that
\[
\Ucal=\ker\left\{ \psi^\top \e^{\beta\tau},\,\tau\ge 0\right\}\ne\{0\}
\]
Let $n'=\dim\Ucal>0$ and define $m'=m-n'<m$. Choose an invertible matrix $P=(P_1;\, P_2)\in\R^{(m'+n')\times m}$ such that $\Ucal=\ker P_1$ and $P_2(\Ucal)=\R^{n'}$. Denote the inverse matrix by $Q=(Q_1,\, Q_2)\in\R^{m\times(m'+n')}$, so that in particular we have
\begin{align}
P_1Q_1&={\rm id}, \label{eq:LR min 0} \\
Q_1P_1 + Q_2P_2 &= {\rm id}. \label{eq:LR min 0.1}
\end{align}
Moreover, as $Q_2$ maps $\R^{n'}$ to $\Ucal$, as $\beta$ leaves $\Ucal$ invariant, and as $\Ucal$ lies in $\ker P_1$ and in $\ker \psi^\top$, we have
\begin{equation}\label{eq:LR min 1}
\psi^\top Q_2=0 \qquad\text{and}\qquad P_1\beta^k Q_2 = 0, \qquad k\ge0.
\end{equation}
We now specify the factor process $(Z'_t; U_t)=PZ_t$, with $Z'_t=P_1Z_t$ and $U_t=P_2Z_t$. In view of~\eqref{eq:LR min 1} we obtain the linear drift dynamics
\[
dZ'_t = (P_1 b + P_1 \beta Q_1 Z'_t)\,dt + P_1 \,dM^Z_t,
\]
and the state price density $\zeta_t$ can be written
\[
\zeta_t =\e^{-\alpha t}(\phi + \psi^\top Q_1Z'_t).
\]
By specifying $\zeta'_t=\zeta_t$ we thus obtain an observationally equivalent $m'$-dimensional LR model.

We claim that the term structure kernel~$\Ucal'$ of this model is zero. We first claim that
\begin{equation} \label{eq:LR min 2}
(P_1\beta Q_1)^k = P_1\beta^k Q_1
\end{equation}
for all $k\ge1$. For $k=1$ this is obvious. Suppose~\eqref{eq:LR min 2} holds for some given power $k\ge1$. Then, in view of~\eqref{eq:LR min 0.1} and~\eqref{eq:LR min 1}, we get
\[
(P_1\beta Q_1)^{k+1} = P_1\beta^k Q_1P_1\beta Q_1 = P_1\beta^{k+1}Q_1 - P_1\beta^k Q_2P_2\beta Q_1 = P_1\beta^{k+1}Q_1.
\]
Thus~\eqref{eq:LR min 2} holds with $k$ replaced by $k+1$, and by induction for all $k\ge1$.

Now let $\xi'\in\Ucal'$ be arbitrary. Note that the observationally equivalent $m'$-dimensional LR model $Z_t'$ and $\zeta_t'$ has drift matrix $\beta'=P_1\beta Q_1$ and state price density vector $\psi'=Q_1^\top \psi$, and, due to the non-constant short rate,
\[
\Ucal' = \ker\left\{ \psi'^\top \e^{\beta'\tau},\,\tau\ge 0\right\}.
\]
Therefore, using~\eqref{eq:LR min 0} and~\eqref{eq:LR min 2}, we obtain
\[
\psi^\top \e^{\beta\tau}Q_1\xi' = \psi^\top Q_1P_1\e^{\beta\tau}Q_1\xi' = \psi'^\top \e^{\beta'\tau}\xi' = 0, \qquad \tau\ge0.
\]
Thus $Q_1\xi'\in\Ucal$. On the other hand, $\Ucal\subseteq\ker P_1$, whence $\xi'=P_1Q_1\xi'=0$. Hence $\Ucal'=\{0\}$ is zero, as claimed. The observationally equivalent $m'$-dimensional LR model $Z_t'$ and $\zeta_t'$ thus satisfies $m'<m$ and $\Ucal'=\{0\}$, which proves \ref{L:ob eq minNEW:proof2}.

The LR case of the lemma now follows from \ref{L:ob eq minNEW:proof1} and \ref{L:ob eq minNEW:proof2}. Indeed, suppose for contradiction that no observationally equivalent LR model with full affine support and zero term structure kernel exists. Then \ref{L:ob eq minNEW:proof1} or \ref{L:ob eq minNEW:proof2} can be applied infinitely many times, each time resulting in an observationally equivalent LR model of strictly lower dimension. This is a contradiction, because the dimension $m$ of the original LR model is finite.

The LG case of the lemma, $b=0$ and $\phi=0$, is proved similarly. In \ref{L:ob eq minNEW:proof1} we simply replace affine supports by linear supports and note that one can take $q=0$. In \ref{L:ob eq minNEW:proof2} no changes are needed. In both cases we then observe that the constructed $m'$-dimensional LR models are in fact LG processes.

\section{Proof of Theorem~\ref{thmLRasLGnew}}

We first prove a lemma which shows that, under the assumption of a zero term structure kernel, observationally equivalent LR models are also algebraically related. The following proofs build on this result.

\begin{lemma}\label{thmnmLReqm'LR}
Assume that the $m$-dimensional LR model \eqref{mLR} has zero term structure kernel, $\Ucal=\{0\}$. Then  any observationally equivalent $m'$-dimensional LR model
\begin{equation}\label{m'LRnew}
   dZ'_t = (b'+\beta' Z'_t)\,dt + dM^{Z'}_t\quad\text{and}\quad
   \zeta'_t = \e^{-\alpha' t} (\phi'+ \psi'^\top Z'_t)
\end{equation}
satisfies
\begin{equation}\label{mLReqm'LR}
Z_t +p  = \frac{\zeta_0}{\zeta'_0}\e^{(\alpha-\alpha') t} (q+Q Z'_t)
\end{equation}
for some vectors $p,q\in\R^{m}$ and some $m\times m'$-matrix $Q$. Moreover, the following properties hold:
\begin{enumerate}
\item\label{thmmLReqm'LR1} If \eqref{mLR} is an LG process, that is, if $b=0$ and $\phi=0$, then $p=0$;
\item\label{thmmLReqm'LR2}  If \eqref{m'LRnew} is an LG process, that is, if $b'=0$ and $\phi'=0$, then $q=0$;
\item\label{thmmLReqm'LR3}  If \eqref{mLR} has full affine support, ${\rm aff}(Z_\cdot)=\R^m$, then $m'\ge m-1$;
\item\label{thmmLReqm'LR4}  If \eqref{mLR} has full affine support, ${\rm aff}(Z_\cdot)=\R^m$, and \eqref{m'LRnew} is an LG process then $m'\ge m$ and $\dim{\rm lin}(Z'_\cdot)\ge m$;
\item\label{thmmLReqm'LR5}  If \eqref{mLR} is an LG process and has full linear support, ${\rm lin}(Z_\cdot)=\R^m$, and \eqref{m'LRnew} is an LG process then $m'\ge m$ and $\dim{\rm lin}(Z'_\cdot)\ge m$.
\end{enumerate}
\end{lemma}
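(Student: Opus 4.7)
The plan is to bootstrap the scalar observational equivalence $\zeta_t = C\zeta'_t$, with $C := \zeta_0/\zeta'_0$, into the vectorial relation~\eqref{mLReqm'LR} by exploiting the linear drifts of $Z_t$ and $Z'_t$ together with the hypothesis $\Ucal=\{0\}$. First, I would read off from $\zeta_t=C\zeta'_t$ the scalar identity $\phi+\psi^\top Z_t = C\e^{(\alpha-\alpha')t}(\phi'+\psi'^\top Z'_t)$. Taking $\Pa^\ast$-conditional expectations of $\zeta_T$ and using that $dZ_t=(b+\beta Z_t)\,dt + dM^Z_t$ forces $\E^\ast_t[Z_T] = \e^{\beta\tau}Z_t+G(\tau)$ with $G(\tau) := \int_0^\tau\e^{\beta s}b\,ds$ (and similarly for $Z'_t$), one obtains the parametrized family of scalar identities
\[
\phi+\psi^\top G(\tau)+\psi^\top\e^{\beta\tau}Z_t = C\e^{(\alpha-\alpha')(t+\tau)}\bigl(\phi'+\psi'^\top G'(\tau)+\psi'^\top\e^{\beta'\tau}Z'_t\bigr),
\]
valid for all $t,\tau\ge 0$.

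Next I would invoke $\Ucal=\{0\}$. Footnote~1 shows that this assumption forces the short rate to be non-constant whenever $m\ge 1$, so by equality in~\eqref{TSKeq} we have $\bigcap_{\tau\ge 0}\ker(\psi^\top\e^{\beta\tau})=\{0\}$, and the covectors $\psi^\top\e^{\beta\tau}$ span $(\R^m)^\ast$ as $\tau$ varies. Pick $\tau_1,\ldots,\tau_m\ge 0$ such that the $m\times m$ matrix $\Phi$ with rows $\psi^\top\e^{\beta\tau_i}$ is invertible; evaluating the parametrized identity at $\tau=\tau_i$ and stacking gives the matrix equation
\[
\Phi Z_t+c = \mu_t D\bigl(c'+\Phi'Z'_t\bigr),
\]
where $\mu_t := C\e^{(\alpha-\alpha')t}$, $D := \mathrm{diag}(\e^{(\alpha-\alpha')\tau_i})$, $\Phi'\in\R^{m\times m'}$ has rows $\psi'^\top\e^{\beta'\tau_i}$, and $c_i := \phi+\psi^\top G(\tau_i)$, $c'_i := \phi'+\psi'^\top G'(\tau_i)$. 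Left-multiplying by $\Phi^{-1}$ then yields~\eqref{mLReqm'LR} with explicit choices $p = \Phi^{-1}c$, $q = \Phi^{-1}Dc'$, and $Q = \Phi^{-1}D\Phi'$.

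The five listed properties follow by inspection of these formulas. For~(i), the LG assumption on the $m$-model forces $G\equiv 0$ and $\phi=0$, so $c=0$ and thus $p=0$; (ii) is symmetric and yields $q=0$. For~(iii), the identity $Z_t+p = \mu_t q+\mu_t QZ'_t$ confines $Z_t+p$ to the linear subspace $\mathrm{span}(q)+Q(\R^{m'})$ of dimension at most $1+m'$, so ${\rm aff}(Z_\cdot)=\R^m$ forces $m'\ge m-1$. For~(iv) and~(v), property~(ii) kills the $q$-term, so $Z_t+p\in Q\cdot{\rm lin}(Z'_\cdot)$; the hypothesis ${\rm aff}(Z_\cdot)=\R^m$, respectively ${\rm lin}(Z_\cdot)=\R^m$ in case~(v) where~(i) additionally gives $p=0$, then forces $Q\cdot{\rm lin}(Z'_\cdot)=\R^m$, so $\rk Q=m$ and $\dim{\rm lin}(Z'_\cdot)\ge m$, hence $m'\ge m$.

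The one conceptually delicate point is the second step, where one needs $\Ucal=\{0\}$ to furnish enough linearly independent covectors $\psi^\top\e^{\beta\tau_i}$ to render $\Phi$ invertible. The only potential obstruction is the constant short-rate scenario, in which $\Ucal=\R^m$ by footnote~1; but this is precisely ruled out once $\Ucal=\{0\}$ and $m\ge 1$ are imposed. Everything else is routine bookkeeping.
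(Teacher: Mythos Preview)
Your proof is correct and follows essentially the same approach as the paper: match conditional expectations of $\zeta_{t+\tau}$ and $\zeta'_{t+\tau}$ to obtain a $\tau$-parametrized family of scalar identities, then use $\Ucal=\{0\}$ to select $m$ values of $\tau$ yielding an invertible stacking matrix, and invert. The five properties are read off by inspection, exactly as in the paper (which likewise notes, for \ref{thmmLReqm'LR4} and \ref{thmmLReqm'LR5}, that ${\rm lin}(Z'_\cdot)={\rm lin}(V_\cdot Z'_\cdot)$ for any positive scalar process $V_t$).

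One minor simplification: your detour through footnote~1 to establish $\bigcap_{\tau\ge0}\ker(\psi^\top\e^{\beta\tau})=\{0\}$ is unnecessary. The containment in~\eqref{TSKeq} already gives $\{0\}=\Ucal\supseteq\ker\{\psi^\top\e^{\beta\tau},\,\tau\ge0\}$ directly, without appealing to non-constancy of the short rate; this is how the paper argues.
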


\begin{remark}
Example \ref{exmfaff} illustrates Lemma~{\ref{thmnmLReqm'LR}\ref{thmmLReqm'LR3}}: an $m$-dimensional LR model with zero term structure kernel and full affine support, which is observationally equivalent to an $(m-1)$-dimensional LR model.
\end{remark}

\begin{proof}
As $\Ucal=\{0\}$ by assumption, it follows from \eqref{TSKeq} that we have
\begin{equation} \label{eq:TSK_000}
\ker\left\{ \psi^\top \e^{\beta\tau},\,\tau\ge 0\right\}=\{0\}.
\end{equation}
Consider an observationally equivalent $m'$-dimensional LR model \eqref{m'LRnew}. Matching the $\Fcal_t$-conditional expectations of $\zeta_{t+\tau}/\zeta_0$ and $\zeta'_{t+\tau}/\zeta'_0$ gives
\begin{multline}\label{eqXX}
   \frac{1}{\zeta_0}\left(\e^{-\alpha (t+\tau)} \left( \phi + \psi^\top \int_0^\tau \e^{\beta s} b\,ds \right) + \e^{-\alpha (t+\tau)} \psi^\top \e^{\beta\tau}Z_t\right)\\
   =\frac{1}{\zeta'_0}\left(\e^{-\alpha' (t+\tau)} \left( \phi' + \psi'^\top \int_0^\tau \e^{\beta' s} b'\,ds \right) + \e^{-\alpha' (t+\tau)} \psi'^\top \e^{\beta'\tau}Z'_t\right)
 \end{multline}
for all $t,\tau\ge 0$. Due to \eqref{eq:TSK_000}, there exist $0\le \tau_1<\cdots<\tau_m$ such that the row vectors $\e^{-\alpha \tau_1}\psi^\top \e^{\beta\tau_1},\dots,\e^{-\alpha \tau_m}\psi^\top \e^{\beta\tau_m}$ are linearly independent. If we stack the corresponding $m$ equations \eqref{eqXX} in matrix form and invert we obtain
\[ \frac{1}{\zeta_0}\e^{-\alpha t} (Z_t +p)  = \frac{1}{\zeta'_0}\e^{-\alpha' t} (q+Q Z'_t) \]
for some vectors $p,q\in\R^{m}$ and some $m\times m'$-matrix $Q$, which is equivalent to \eqref{mLReqm'LR} .

If \eqref{mLR} is an LG process, so that $\phi=0$ and $b=0$, we infer that $p=0$, which proves property~{\ref{thmmLReqm'LR1}}. Similarly, if \eqref{m'LRnew} is an LG process, so that $\phi'=0$ and $b'=0$, we infer that $q=0$, which proves property~{\ref{thmmLReqm'LR2}}. Properties {\ref{thmmLReqm'LR3}}--{\ref{thmmLReqm'LR5}} now follow by inspection. For \ref{thmmLReqm'LR4} and \ref{thmmLReqm'LR5} we use that ${\rm lin}(Z'_\cdot)={\rm lin}(V_\cdot Z'_\cdot)$ for any positive scalar process $V_t>0$.
\end{proof}

We now prove Theorem~\ref{thmLRasLGnew}. Lemma~\ref{thmnmLReqm'LR}\ref{thmmLReqm'LR4} shows that there does not exist any observationally equivalent lower dimensional LG process. Furthermore, the implication $\ref{thmLRasLGnew2} \Rightarrow \ref{thmLRasLGnew1}$ follows directly from Corollary~\ref{lemLRphi0X1} by considering the contrapositive statement. To prove $\ref{thmLRasLGnew1}\Rightarrow\ref{thmLRasLGnew2}$ we again consider the contrapositive, and assume there exists an observationally equivalent $m'$-dimensional LG process \eqref{m'LRnew}, with $b'=0$ and $\phi'=0$, such that $m'=m$. There is no loss of generality to assume that $\zeta_0'=\zeta_0$ and $\alpha'=\alpha$. Indeed, we may otherwise consider the new factor process and state price density
\[
Z''_t=\frac{\zeta_0}{\zeta'_0}\e^{(\alpha-\alpha') t} Z'_t \quad\text{and}\quad \zeta''_t=\frac{\zeta_0}{\zeta'_0}\zeta'_t = \e^{-\alpha t}\psi'^\top Z''_t
\]
to obtain an observationally equivalent $m$-dimensional LG process with the desired properties. We thus assume that $\zeta_0'=\zeta_0$ and $\alpha'=\alpha$. Lemma~\ref{thmnmLReqm'LR}\ref{thmmLReqm'LR2} then yields
\[
Z_t + p = QZ'_t
\]
for some $p\in\R^m$ and some $m\times m$-matrix $Q$. As $Z_t$ has full affine support, $Q$ is invertible, so by equating the drifts of $Z_t+p$ and $QZ'_t$ we obtain
\[
b+\beta Z_t = Q\beta' Q^{-1}p + Q\beta' Q^{-1} Z_t.
\]
Using that $Z_t$ has full affine support yields $\beta=Q\beta' Q^{-1}$ and then $b=Q\beta' Q^{-1}p=\beta p$. Next, as the two models are observationally equivalent with $\zeta'_0=\zeta_0$ and $\alpha'=\alpha$, we get $\phi + \psi^\top Z_t = \psi'^\top Z'_t$ and hence
\[
\phi + \psi^\top Z_t = \psi'^\top Q^{-1} p + \psi'^\top Q^{-1} Z_t.
\]
Using that $Z_t$ has full affine support yields $\psi^\top=\psi'^\top Q^{-1}$ and then $\phi=\psi'^\top Q^{-1} p=\psi^\top p$. We have thus proved that $b-\beta p=0$ and $\phi-\psi^\top p=0$, showing that the LR model \eqref{mLR} is not proper. This proves $\ref{thmLRasLGnew1}\Rightarrow\ref{thmLRasLGnew2}$.

It remains to prove $\ref{thmLRasLGnew1}\Leftrightarrow\ref{thmLRasLGnew3}$. To this end, first observe that, as $\Ucal=\{0\}$ and ${\rm aff}(Z_\cdot)=\R^m$, the short rate process is non-constant, see Footnote~\ref{footnote1}. Therefore we have equality in \eqref{TSKeq}, and the term structure kernel $\Ucal'$ of the observationally equivalent $(m+1)$-dimensional LG process \eqref{LRasLGm1new} satisfies
\[
\Ucal' = \ker\left\{ \psi'^\top \e^{\beta'\tau},\,\tau\ge 0\right\}.
\]
Consequently, a vector $\xi'=(\delta;\xi)\in\R^{1+m}$ lies in $\Ucal'$ if and only if
\begin{equation} \label{eq L obs eq 2}
0 =  \psi'^\top \e^{\beta'\tau}\xi' = \left(\phi + \psi^\top\int_0^\tau \e^{\beta s}b\,ds\right)\delta +  \psi^\top \e^{\beta\tau}\xi, \quad \tau\ge0,
\end{equation}
where we used the elementary fact that
\[
\e^{\beta'\tau}=\begin{pmatrix} 1 & 0 \\ \int_0^\tau \e^{\beta s} b \,ds & \e^{\beta\tau}\end{pmatrix}.
\]
For $\delta=0$, \eqref{eq L obs eq 2} is equivalent to $\xi\in\Ucal$, and thus equivalent to $\xi=0$. It follows that an element of $\Ucal'$ is nonzero if and only if it is a scalar multiple of some vector $(1;\xi)$ satisfying
\[
\phi + \psi^\top\int_0^\tau \e^{\beta s}b\,ds +  \psi^\top \e^{\beta\tau}\xi = 0, \quad \tau\ge0.
\]
By rewriting the left-hand side, we infer that $\Ucal'$ is nonzero if and only if there exists some $\xi\in\R^m$ such that
\begin{equation} \label{thmLRasLGnew1_eq1}
\phi+\psi^\top \xi + \psi^\top\int_0^\tau \e^{\beta s}(b+\beta\xi)\,ds = 0, \quad \tau\ge0.
\end{equation}
If this is the case we obtain $\phi+\psi^\top\xi=0$ by setting $\tau=0$, and $b+\beta\xi=0$ by differentiating with respect to $\tau$ and setting $\tau=0$. Thus the $m$-dimensional LR model \eqref{mLR} is not proper. Conversely, if the $m$-dimensional LR model \eqref{mLR} is not proper, so that \eqref{LGtypeeq} holds for some $q\in\R^m$, it follows that \eqref{thmLRasLGnew1_eq1} holds for $\xi=q$, whence $\Ucal'$ is nonzero. This proves $\ref{thmLRasLGnew1}\Leftrightarrow\ref{thmLRasLGnew3}$.

It remains to prove the last statement in Theorem~\ref{thmLRasLGnew}. Suppose, by contradiction, that there exists an observationally equivalent $m'$-dimensional LR model \eqref{m'LRnew} with $m'<m$. By Lemma~\ref{thmnmLReqm'LR}{\ref{thmmLReqm'LR3}} we have $m'=m-1$. But then \eqref{LRasLGm1mod}, with $Z_t'$ in lieu of $Z_t$ on the right hand side, defines an observationally equivalent $m$-dimensional LG process, which contradicts the assumption. This completes the proof of the theorem.

\section{Proof of Theorem~\ref{thmrLGLRrepr}}
{\ref{thmLGLRrepr1}}$\Rightarrow${\ref{thmLGLRrepr2}}: Consider an observationally equivalent $m'$-dimensional LR model~\eqref{m'LRnew} with $m'<m$. Lemma~\ref{thmnmLReqm'LR}\ref{thmmLReqm'LR1} implies that
\begin{equation}\label{eqXXT}
 Z_t    = \frac{\zeta_0}{\zeta'_0}\e^{(\alpha-\alpha') t} (q+Q Z'_t)
\end{equation}
for some $ q\in\R^{m}$ and some $m\times m'$-matrix $Q$. As ${\rm lin}(Z_\cdot)=\R^m$ we infer that $m'=m-1$ and there exists a $v\in\R^m$ such that $v^\top q>0 $ and $v^\top Q=0$. Property~{\ref{thmLGLRrepr2}} follows by left-multiplying~\eqref{eqXXT} by $v^\top$. In view of Lemma~\ref{thmnmLReqm'LR}\ref{thmmLReqm'LR5} there does not exist any observationally equivalent $(m-1)$-dimensional LG process, hence the LR model~\eqref{m'LRnew} is proper. It also follows that the LR model~\eqref{m'LRnew} has full affine support, ${\rm aff}(Z'_\cdot)=\R^{m-1}$, and zero term-structure kernel, $\Ucal'=\{0\}$. Because otherwise, by Lemma~\ref{L:ob eq minNEW}, one could find an observationally equivalent $(m-2)$-dimensional LR model, which again would induce through \eqref{LRasLGm1new} an observationally equivalent $(m-1)$-dimensional LG process, contradicting that the LR model~\eqref{m'LRnew} is proper.

{\ref{thmLGLRrepr2}}$\Rightarrow${\ref{thmLGLRrepr3}}: Taking conditional expectation, we obtain for any $t\le T$
\[ v^\top \e^{\beta (T-t)} Z_t= \E^\ast_t[ v^\top Z_T]=v^\top Z_0\e^{\lambda T}= \e^{\lambda(T-t)} v^\top Z_t.\]
As ${\rm lin}(Z_\cdot)=\R^m$, we conclude that $v$ is a left eigenvector of $\beta$ with eigenvalue $\lambda$. This proves the last statement in the theorem.

Now let $Q$ be an invertible $m\times m$-matrix whose first row is $v^\top$, and define $\widetilde Z_t = \e^{-\lambda t} Q Z_t/(v^\top Z_0)$. Then $\widetilde Z_t$ has a strictly linear drift,
\[
d\widetilde Z_t = Q(\beta-\lambda)Q^{-1}\widetilde Z_t\, dt + \e^{-\lambda t}/(v^\top Z_0) Q \,dM^Z_t,
\]
and the first component of $\widetilde Z_t$ is constant and equal to one. Let $Z'_t$ consist of the last $m-1$ components of $\widetilde Z_t$, so that $\widetilde Z_t=(1;Z_t')$. Set $\alpha'=\alpha-\lambda$ and $\psi'=(Q^\top)^{-1}\psi$, then $\zeta'_t=\e^{-\alpha' t} \psi'^\top (1;Z_t') = \zeta_t/(v^\top Z_0)$ is seen to define an observationally equivalent $(m-1)$-dimensional LR model of the desired form.

{\ref{thmLGLRrepr3}}$\Rightarrow${\ref{thmLGLRrepr1}}: This holds by definition, which completes the proof of Theorem~\ref{thmrLGLRrepr}.

\section{Proof of Lemma~\ref{lemLRnr}}

As $\beta$ is invertible, we have
\[
\E^\ast_t[\phi+\psi^\top Z_T] = \phi-\psi^\top\beta^{-1}b + \psi^\top \e^{\beta (T-t)} (Z_t+\beta^{-1}b),
\]
which converges to $\phi-\psi^\top\beta^{-1}b$ in $L^1$ as $T\to\infty$ because all eigenvalues of $\beta$ have negative real part. As the LR model is proper, we have $\phi-\psi^\top\beta^{-1}b > 0$ and therefore
\[
M(t,T)= \frac{\E^\ast_t[\phi+\psi^\top Z_T]}{\E^\ast[\phi+\psi^\top Z_T]} \to 1\quad\text{ in $L^1$ as $T\to\infty$.}
\]
By Lemma~\ref{lemPilfm}, this proves that $\Pa^*$ is the long forward measure. Moreover, we have
\[
- \frac{1}{T-t}\log P(t,T) = \alpha - \frac{1}{T-t}\log \E^\ast_t[\phi+\psi^\top Z_T] + \frac{1}{T-t}\log\left(\phi+\psi^\top Z_t\right) \to \alpha
\]
in probability as $T\to\infty$. Thus $y_\infty(t)=\alpha$ as claimed.

\section{Proof of Theorem~\ref{thmGREAT} }

The proof of Theorem~\ref{thmGREAT} builds on the following lemma.

\begin{lemma} \label{L:thmGREAT}
Assume that the $m$-dimensional LG process \eqref{mLR}, with $b=0$ and $\phi=0$, has zero term structure kernel, $\Ucal=\{0\}$, full linear support, ${\rm lin}(Z_\cdot)=\R^m$, and constant long-term yield $y_\infty(t)=\alpha$. Then the eigenvalues of $\beta$ have nonpositive real parts.
\end{lemma}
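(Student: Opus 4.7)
My plan is to argue by contradiction. Suppose that $\beta$ has an eigenvalue with strictly positive real part, and decompose $\R^m=W_+\oplus W_{\le 0}$ into the real $\beta$-invariant subspaces corresponding respectively to eigenvalues with positive and nonpositive real parts; by assumption $W_+\neq\{0\}$. Let $P_+$ be the projection onto $W_+$ along $W_{\le 0}$ and set $\sigma_0:=\min\{\text{Re}(\lambda):\lambda\text{ eigenvalue of }\beta|_{W_+}\}>0$. Since ${\rm lin}(Z_\cdot)=\R^m$ and $W_{\le 0}$ is a proper subspace of $\R^m$, the process $Z_t$ cannot lie in $W_{\le 0}$ almost surely for every $t\ge 0$. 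Hence there exist $t_0\ge 0$ and an event $A$ with $\Pa^\ast(A)>0$ on which $P_+ Z_{t_0}\neq 0$.

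Next I would use that $\Ucal=\{0\}$ implies $\ker\{\psi^\top\e^{\beta\tau},\tau\ge 0\}=\{0\}$ by \eqref{TSKeq}; together with the $\beta$-invariance of $W_+$ this forces $\tau\mapsto \psi^\top\e^{\beta\tau}|_{W_+}$ to have trivial kernel in $W_+$. Thus for each $\omega\in A$ and $z=Z_{t_0}(\omega)$ with $z_+=P_+ z\neq 0$, the function $\tau\mapsto\psi^\top\e^{\beta\tau}z_+=\sum_j q_j(\tau)\e^{\lambda_j\tau}$ (summed over eigenvalues $\lambda_j$ with $\text{Re}(\lambda_j)>0$) is a nontrivial exponential polynomial, satisfying $\limsup_{\tau\to\infty}\tfrac{1}{\tau}\log|\psi^\top\e^{\beta\tau}z_+|\ge \sigma_0$. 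Since the complementary piece $\psi^\top\e^{\beta\tau}(z-z_+)$ grows at most subexponentially while $\psi^\top\e^{\beta\tau}z$ is positive (bond prices are positive), asymptotic analysis of exponential polynomials shows that the set $F_\omega:=\{\tau\ge 0:\psi^\top\e^{\beta\tau}Z_{t_0}(\omega)>\e^{(\sigma_0/2)\tau}\}$ has positive lower Lebesgue density in $\R_+$, uniformly over $\omega$ ranging in a subset $A'\subseteq A$ of positive probability (via an Egorov-type extraction).

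Finally, I would close the argument by combining the density estimate with convergence in probability. The hypothesis $y_\infty(t_0)=\alpha$ means $\Pa^\ast(\psi^\top\e^{\beta\tau}Z_{t_0}>\e^{(\sigma_0/2)\tau})\to 0$ as $\tau\to\infty$, so by Ces\`aro the time average $\tfrac{1}{T}\int_0^T\Pa^\ast(\psi^\top\e^{\beta\tau}Z_{t_0}>\e^{(\sigma_0/2)\tau})\,d\tau$ tends to $0$. By Fubini this quantity equals $\E^\ast[\tfrac{1}{T}\mu(F_\omega\cap[0,T])]$, where $\mu$ denotes Lebesgue measure, and by Fatou's lemma it is bounded below by a positive multiple of $\Pa^\ast(A')>0$ as $T\to\infty$, yielding the desired contradiction. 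The principal obstacle in this plan is the positive-density claim: one must rule out that possibly complex dominant eigenvalues of $\beta|_{W_+}$ cause the exponential polynomial $\psi^\top\e^{\beta\tau}z_+$ to oscillate through small values on a set of full density. Bond-price positivity is essential here, since it prevents the leading oscillatory term from persistently cancelling against the lower-order subexponential remainder.
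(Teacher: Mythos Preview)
Your plan heads in a workable direction, but the step you yourself flag as the ``principal obstacle'' is a genuine gap, and the proposal does not close it. When the eigenvalues of $\beta|_{W_+}$ with largest real part $\sigma$ are a complex pair, the leading part of $\psi^\top \e^{\beta\tau}z_+$ has the form $\tau^k \e^{\sigma\tau}g(\tau)$ with $g$ a (possibly almost periodic) trigonometric polynomial. Positivity of $\psi^\top \e^{\beta\tau}z$ together with the subexponential growth of the $W_{\le 0}$-part does force $g\ge 0$, but $g$ may still touch zero (e.g.\ $g(\tau)=1+\cos\omega\tau$), and you then need a quantitative statement about how rarely a nonnegative exponential polynomial can dip below $\e^{-(\sigma/2)\tau}$. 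This is provable, but it is a real piece of analysis that you have not supplied; likewise the ``Egorov-type'' uniformity in $\omega$ is not Egorov (you need a uniform lower bound on a density, not almost-uniform convergence) and would have to be extracted from the explicit exponential-polynomial structure of $\tau\mapsto\psi^\top \e^{\beta\tau}Z_{t_0}(\omega)$.

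The paper's proof sidesteps all of this with a short linear-algebraic argument. From $y_\infty(t)=\alpha$ one gets, for each $\varepsilon>0$, eventually $0<\psi^\top \e^{\beta\tau}Z_t\le \e^{\varepsilon\tau}$. Full linear support ${\rm lin}(Z_\cdot)=\R^m$ then lets one choose finitely many $(t_i,\omega_i)$ with $Z_{t_i}(\omega_i)$ a basis of $\R^m$, giving $\|\psi^\top \e^{\beta\tau}\|\le c_1 \e^{\varepsilon\tau}$. Zero term structure kernel $\Ucal=\{0\}$ provides $\tau_1,\dots,\tau_m$ with $\psi^\top \e^{\beta\tau_i}$ linearly independent; stacking $\psi^\top \e^{\beta\tau_i}\e^{\beta\tau}=\psi^\top \e^{\beta(\tau_i+\tau)}$ and inverting yields $\|\e^{\beta\tau}\|\le c_2 \e^{\varepsilon\tau}$, whence ${\rm Re}\,\lambda\le\varepsilon$ for every eigenvalue $\lambda$ of $\beta$. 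Both hypotheses are used in exactly the roles you assign them, but the combination is far more direct: full linear support bounds the row $\psi^\top \e^{\beta\tau}$, and $\Ucal=\{0\}$ upgrades this to a bound on $\e^{\beta\tau}$ itself, with no exponential-polynomial or measure-density analysis required.
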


\begin{proof}
The equality
\[
- \frac{1}{T-t}\log P(t,T) = \alpha - \frac{1}{T-t}\log \psi^\top \e^{\beta(T-t)}Z_t + \frac{1}{T-t}\log\psi^\top Z_t
\]
along with the assumption that $y_\infty(t)=\alpha$ yields
\[
\lim_{\tau\to\infty} \frac{1}{\tau}\log \psi^\top \e^{\beta\tau}Z_t  = 0
\]
in probability, and hence almost surely, because $Z_t$ does not depend on $\tau$. Consequently, for any $\varepsilon>0$ we have
\[
 \psi^\top \e^{\beta \tau} Z_t \le \e^{\varepsilon\tau}
\]
for all $\tau\ge\tau_0$, where $\tau_0$ depends on $t$, $\omega$, and $\varepsilon$. As $Z_t$ has full linear support we infer that
\begin{equation} \label{psibound}
\| \psi^\top \e^{\beta\tau} \| \le c_1 \e^{\varepsilon\tau} \quad\text{for all $\tau\ge0$}
\end{equation}
for some positive constant $c_1$. Stacking the trivial identities $\psi^\top\e^{\beta\tau_i}\e^{\beta\tau}=\psi^\top\e^{\beta(\tau_i+\tau)}$ for suitable values of $\tau_1,\ldots,\tau_m$ and using that the term structure kernel is zero so that equality holds in \eqref{TSKeq}, we get
\[
\e^{\beta\tau} = A^{-1} B(\tau),\quad\text{for $A=\begin{pmatrix}\psi^\top\e^{\beta\tau_1} \\ \vdots \\ \psi^\top\e^{\beta\tau_m} \end{pmatrix}$ and $B(\tau)=  \begin{pmatrix}\psi^\top\e^{\beta(\tau_1+\tau)} \\ \vdots \\ \psi^\top\e^{\beta(\tau_m+\tau)} \end{pmatrix}$.}
\]
In view of \eqref{psibound}, the operator norm of $\e^{\beta\tau}$ is therefore bounded by
\[
\|\e^{\beta\tau}\| \le  \|A^{-1}  \|\,  \| B(\tau) \|
\le  \|A^{-1} \|\, \max_{i=1,\ldots,m} \| \psi^\top \e^{\beta(\tau_1+\tau)} \|
\le c_2 \e^{\varepsilon\tau}
\]
for all $\tau\ge0$ and some positive constant $c_2$. On the other hand, every eigenvalue $\lambda$ of $\beta$ satisfies $\e^{{\rm Re\,} \lambda \tau}\le\|\e^{\beta\tau}\|$, and hence ${\rm Re\,}\lambda\le\varepsilon$. As $\varepsilon>0$ was arbitrary it follows that ${\rm Re\,}\lambda\le0$ as claimed.
\end{proof}

We can now prove Theorem~\ref{thmGREAT}.
We first prove \ref{thmGREAT3}. Indeed, let $v\in\R^m$ be such that $v^\top Z_T=0$ for all $T\ge 0$. Taking conditional expectation we obtain $v^\top\e^{\beta T} (\e^{-\beta t}Z_t)=0$ for all $t\le T$. By assumption there exists some $T\ge 0$ such that ${\rm lin}(\e^{-\beta t}Z_t ,t\le T)=\R^m$. We conclude that $v^\top \e^{\beta T}=0$ and hence $v=0$, so that $Z_t$ has full linear support, which proves~\ref{thmGREAT3}.

Define the row vector valued function
\[
F(T) = \frac{\psi^\top \e^{\beta T}}{\psi^\top \e^{\beta T} \E^\ast[Z_0]}.
\]
In view of Lemma~\ref{lemPilfm}, and because $\Pa^\ast$ is the long forward measure, we have
\begin{equation} \label{lim is one}
\text{$\lim_{T\to\infty} F(T) \e^{-\beta t}Z_t = \lim_{T\to\infty} \frac{\psi^\top \e^{\beta (T-t)}Z_t}{\psi^\top \e^{\beta T} \E^\ast[Z_0]} = 1$ in $L^1$ for all $t\ge0$.}
\end{equation}
As ${\rm lin}(\e^{-\beta \cdot}Z_\cdot)=\R^m$, this implies that
\begin{equation} \label{F(T)conv}
\lim_{T\to\infty}F(T)=v^\top
\end{equation}
for some nonzero vector $v\in\R^m$. Consequently,
\begin{equation} \label{F'(T)conv}
F'(T) = F(T)\beta - F(T)\beta \E^\ast[Z_0] F(T)  \to v^\top \beta - v^\top \beta \E^\ast[Z_0] v^\top
\end{equation}
as $T\to\infty$.

For any real-valued $C^1$ function $f(T)$ on $[0,\infty)$ such that $a=\lim_{T\to\infty}f(T)$ and $a'=\lim_{T\to\infty}f'(T)$ both exist and are finite, one necessarily has $a'=0$. Indeed, $a'>0$ would imply that $f'(T)\ge a'/2>0$ for all $T$ greater than some finite $T_0$, which gives the contradiction
\[
f(T)=f(T_0) + \int_{T_0}^T f'(t)\,dt \ge f(T_0) + \frac{a'}{2}(T-T_0) \to \infty
\]
as $T\to\infty$. This shows that $a'\le0$, and one similarly finds $a'\ge0$. In view of \eqref{F(T)conv} and \eqref{F'(T)conv} we may apply this to the components of $F(T)$ to get $\lim_{T\to\infty}F'(T)=0$ and then
\[
v^\top \beta = \lambda v^\top \quad \text{where} \quad \lambda = v^\top \beta \E^\ast[Z_0].
\]
This shows that $v$ is a left eigenvector of $\beta$ with real eigenvalue $\lambda$. Plugging this back in \eqref{lim is one}, combined with \eqref{F(T)conv}, gives
\[ 1= v^\top \e^{-\beta t}Z_t = \e^{-\lambda t} v^\top Z_t\]
and Theorem~\ref{thmrLGLRrepr} shows that the LG process \eqref{mLR}, with $b=0$ and $\phi=0$, is reducible. This proves~\ref{thmGREAT1}.

Next, to compute the long-term yield we write
\[
\log P(t,T) = - \alpha(T-t) + \log \psi^\top \e^{\beta T} \E^\ast[Z_0] + \log \frac{\psi^\top \e^{\beta(T-t)} Z_t}{\psi^\top \e^{\beta T}\E^\ast[Z_0]} - \log\psi^\top Z_t.
\]
As the ratio in the second to last term converges to $1$ in $L^1$ due to \eqref{lim is one}, we get
\[
-\frac{1}{T-t}\log P(t,T) = \alpha - \frac{\log \psi^\top \e^{\beta T} \E^\ast[Z_0]}{T-t} + o(1).
\]
where $o(1)\to 0$ in probability as $T\to\infty$. Furthermore, \eqref{F(T)conv} gives
\[
\frac{d}{dT} \log \psi^\top \e^{\beta T} \E^\ast[Z_0] = F(T)\beta \E^\ast[Z_0] \to v^\top \beta \E^\ast[Z_0] = \lambda
\]
as $T\to\infty$. L'H\^opital's rule thus yields
\[
y_\infty(t) =  \alpha - \lim_{T\to\infty} \frac{\log \psi^\top \e^{\beta T} \E^\ast[Z_0]}{T-t} = \alpha-\lambda
\]
in probability, as claimed.\footnote{We use the following form of L'H\^opital's rule: Let $f(T)$ and $g(T)$ be differentiable with with $g'(T)\ne0$ on $(0,\infty)$. If $a=\lim_{T\to\infty}f'(T)/g'(T)$ exists and $ |g(T)|\to \infty$ as $T\to\infty$, then $\lim_{T\to\infty}f(T)/g(T)=a$. }

It remains to prove that $\lambda$ is larger than or equal to the real parts of all eigenvalues of $\beta$. To this end, consider the observationally equivalent LG process given by $Z'_t=\e^{-\lambda t}Z_t$ and $\zeta_t'=\e^{-\alpha' t}\psi^\top Z_t'=\zeta_t$ with $\alpha'=\alpha-\lambda$. This model has zero term structure kernel, full linear support, and constant long-term yield $y_\infty(t)=\alpha'$. Lemma~\ref{L:thmGREAT} thus implies that the eigenvalues of $\beta'=\beta-\lambda$ have nonpositive real parts. Thus $\lambda$ is larger than or equal to the real parts of all eigenvalues of $\beta$, as required.  This proves \ref{thmGREAT2} and completes the proof of Theorem~\ref{thmGREAT}.

\end{appendix}

\bibliographystyle{chicago}
\bibliography{abt2_DF}

\end{document}